\documentclass{article}[10pt]
\usepackage{epsfig,pict2e}
\usepackage{amsmath}
\usepackage{amsthm}
\usepackage{amsfonts}
\usepackage{amssymb}
\usepackage{mathtools}
\usepackage{relsize}
\usepackage{tikz}
\usepackage{tikz-cd}
\usetikzlibrary{calc} 
\usepackage{stmaryrd}
\usepackage{geometry}
\usepackage[toc,page]{appendix}
\usepackage{graphicx}
\usepackage{caption}
\usepackage{subcaption}
\usepackage{hyperref}
\usepackage{array} 
\usepackage{paralist} 
\usepackage{verbatim} 
\newtheorem{theorem}{Theorem}[section]
\newtheorem*{definition*}{Definition}
\newtheorem{definition}[theorem]{Definition}
\newtheorem*{theorem*}{Theorem}
\newtheorem{proposition}[theorem]{Proposition}
\newtheorem*{proposition*}{Proposition}

\newtheorem*{lemma*}{Lemma}
\newtheorem*{claim*}{Claim}

\newtheorem*{corollary*}{Corollary}

\newtheorem*{example*}{Example}
\newtheorem*{conjecture*}{Conjecture}
\newtheorem{remark}[theorem]{Remark}
\newtheorem*{remark*}{Remark}

\newcommand{\Z}{\mathbb{Z}}

\newcommand{\C}{\mathbb{C}}

\newcommand{\p}{\mathbb{P}}

\renewcommand{\L}{\mathcal{L}}

\renewcommand{\H}{\mathcal{H}}
\newcommand{\E}{\mathcal{E}}
\newcommand{\Pic}{\text{Pic}}
\renewcommand{\div}{\text{div}}
\newcommand{\Div}{\text{Div}}

\newcommand{\pain}[1]{\text{P}_{\mathrm{#1}}}

\newenvironment{indented}{\begin{indented}}{\end{indented}}
\def\indented{\list{}{\itemsep=0\p@\labelsep=0\p@\itemindent=0\p@
   \labelwidth=0\p@\leftmargin=\mathindent\topsep=0\p@\partopsep=0\p@
   \parsep=0\p@\listparindent=15\p@}\footnotesize\rm}



\begin{document}
\begin{center}
\LARGE{\textbf{Full-parameter discrete Painlev\'e systems from non-translational Cremona isometries}}
\end{center}
\vspace{1ex}
\begin{center}
\normalsize{Alexander Stokes}\\
\noindent \small{Department of Mathematics, University College London,\\
Gower Street, London, WC1E 6BT, UK}
\end{center}
\normalsize


\begin{abstract} 
Since the classification of discrete Painlev\'e equations in terms of rational surfaces, there has been much interest in the range of integrable equations arising from each of the 22 surface types in Sakai's list. For all but the most degenerate type in the list, the surfaces come in families which admit affine Weyl groups of symmetries. Translation elements of this symmetry group define discrete Painlev\'e equations with the same number of parameters as their family of surfaces. While non-translation elements of the symmetry group have been observed to correspond to discrete systems of Painlev\'e-type through projective reduction, these have fewer than the maximal number of free parameters corresponding to their surface type. We show that difference equations with the full number of free parameters can be constructed from non-translation elements of infinite order in the symmetry group, constructing several examples and demonstrating their integrability. This is prompted by the study of a previously proposed discrete Painlev\'e equation related to a special class of discrete analogues of surfaces of constant negative Gaussian curvature, which we generalise to a full-parameter integrable difference equation, given by the Cremona action of a non-translation element of the extended affine Weyl group $\widetilde{W}(D_4^{(1)})$ on a family of generic $D_4^{(1)}$- surfaces.
\end{abstract}

\section{Introduction}

In a seminal paper, H. Sakai proposed a geometric framework for the study of discrete Painlev\'e equations \cite{SAKAI2001}. Sakai defined and classified families of surfaces generalising Okamoto's space of initial values for the continuous Painlev\'e equations \cite{OKAMOTO1979}, with discrete Painlev\'e equations defined in terms of the translation part of the affine Weyl symmetry group of each family. In addition to a classification scheme, the geometric approach offers a suite of tools for describing many apects of Painlev\'e equations, as demonstrated thoroughly in an important recent survey by Kajiwara, Noumi and Yamada \cite{KNY2017}. Further, recent work (e.g. \cite{DT2018}) has also pointed out the value of the geometric approach in identifying when a given system is equivalent to a known example, in which case one can make use of known special solutions, B\"acklund transformations, etc.  \\

It is now widely appreciated that Sakai's scheme classifies surfaces into a finite number of types, but does not further classify the equations belonging to each. Recently, much research has explored the range of inequivalent discrete systems which may be associated with each of the 22 surface types in Sakai's list. In particular, systems proposed to be of discrete Painlev\'e type independently of Sakai's scheme being placed within the geometric framework has shed much light on the infinite number of discrete Painlev\'e systems associated with each surface type. Examples include equations sharing the same space of initial values but corresponding to non-conjugate translations in the affine Weyl symmetry group \cite{JN2017, RG2009}, as well as elements of infinite order, though not translations, giving rise to difference equations with finer time evolution and less free parameters through projective reduction \cite{AHJN2016, HHNS2015, KNT2011, TAKENAWA2003}.\\

Our approach continues along these lines, in that we begin with a difference equation previously proposed as a discrete analogue of a special case of the third Painlev\'e equation, and place it within the geometric framework with a view to better understanding the features of Sakai's scheme. \\

The equation we consider was constructed in \cite{HOFFMAN1999}, through a discrete version of the process by which a special class of K-surfaces (surfaces of constant negative Gaussian curvature) are controlled by a particular case of the third Painlev\'e equation. The discrete analogues of this class are known as discrete Amsler surfaces, and are related to solutions of the discrete Sine-Gordon equation
\begin{equation} \label{SG}
Q_{l+1,m+1}Q_{l,m}= F(Q_{l+1,m}) F(Q_{l,m+1}),
\end{equation}
where $F(x)= \frac{1-k x}{k-x}$, with free complex parameter $k \not\in \left\{0, \pm1 \right\}$. Solutions corresponding to discrete Amsler surfaces satisfy an additional condition, which may be interpreted as invariance under Lorentz rotations of the frame (see \cite{HOFFMAN1999} for details). The reduction in the continuous setting involves considering solutions of the Sine-Gordon equation along the diagonal. In the discrete case, equation \eqref{SG}, together with the additional constraint, gives a system which may be iterated along a zigzag path, leading to the following system of ordinary difference equations:
\begin{subequations} \label{Q}
\begin{align}
Q_{2n+1} &= \frac{ \frac{1}{F(Q_{2n})}- \frac{Q_{2n-1}}{2n+1}}{Q_{2n-1} F(Q_{2n}) - \frac{1}{2n+1}}, \\
Q_{2n+2} &= \frac{1}{ Q_{2n} (F(Q_{2n+1}))^2}.
\end{align}
\end{subequations}
We rewrite this in terms of the variables $(f_n, g_n) = (Q_{2n}, Q_{2n-1})$, which gives  
\begin{subequations} \label{hoffmans}
\begin{align}
\bar{f} &= \frac{(k-\bar{g})^2}{f (k \bar{g}-1)^2}, \\
\bar{g} &= \frac{(k-f) \left( k fg - (2n+1) f - g +k(2n+1) \right)}{(kf-1) \left( k(2n+1) f g- f - (2n+1) g + k \right)},
\end{align}
\end{subequations}
where $(f,g) = (f_n,g_n)$ and $(\bar{f},\bar{g}) = (f_{n+1}, g_{n+1})$. \\

In what follows, we construct the space of initial values for the system \eqref{hoffmans}, which is a family of $D_4^{(1)}$-surfaces with one free parameter, rather than five in the generic case. The Cremona isometry corresponding to iteration of the system is a `twisted translation': the composition of a translation in the weight lattice and a Dynkin diagram automorphism which preserves the translation direction. The parameter specialisation causes the action on parameter space to coincide with that of the untwisted translation, so the system is in this sense a projective reduction, whose properties we compare to previously studied examples \cite{KNT2011,HHNS2015,AHJN2016,TAKENAWA2003,JN2017}. This prompts us to construct a generic version (in the sense that it has the maximal number of free parameters for its surface type), for which the action on parameter space is translational except for a permutation of parameters. We express this explicitly as a system of difference equations, and show that it is integrable in the sense of vanishing algebraic entropy \cite{BV1999}. We then show how full-parameter difference equations can be obtained in this way from the Cremona action of elements of infinite order on a family of generic surfaces, though they will not necessarily be of purely additive, multiplicative or elliptic type.

\subsection{Background}

The continuous Painlev\'e equations $\pain{I}$-$\pain{VI}$ are six nonlinear second-order ordinary differential equations, the study of which has become one of the cornerstones of the field of integrable systems. Beginning in the 1990's, important steps were made towards defining and understanding their discrete analogues, through the proposal by Ramani and Grammaticos, together with Papageorgiou, of \emph{singularity confinement} \cite{RPG1991} as the discrete counterpart to the Painlev\'e property. This led to the technique of deautonomisation by singularity confinement, which when applied to the Quispel-Roberts-Thompson (QRT) mappings \cite{QRT1989, QRT1988} resulted in the discovery of many integrable equations of discrete Painlev\'e type (see e.g. \cite{RGH1991}). \\

The property of the continuous Painlev\'e equations that would prove most useful in formulating the theory in the discrete setting was a geometric one. Okamoto demonstrated that each of the Painlev\'e equations could be associated with a family of rational surfaces obtained from $\C\p^2$ through a sequence of nine blowups \cite{OKAMOTO1979}. This was through the construction of a bundle over the independent variable space (excluding locations of singularities of the equation) whose fibres are rational surfaces with certain curves removed. These curves are known as \emph{inaccessible divisors}, and form the subset of each fibre on which the Painlev\'e vector field diverges. On the resulting bundle, known as \emph{Okamoto's space of initial conditions}, the Painlev\'e equation is regularised in the sense that the Cauchy problem for the differential equation is well-posed at every point. In particular, the intersection configuration of the removed curves was observed to be given by a Dynkin diagram of affine type, complementary to the one associated with the affine Weyl group of B\"acklund transformation symmetries of the equation. \\

It was shown that Okamoto's space essentially determines the Painlev\'e equation in each case \cite{MMT1999,MATSUMIYA1997, TS1997}. The curves removed from each fibre were observed to give a decomposition into irreducible components of a representative of the anticanonical divisor class of the surface, which led to the idea of classifying rational surfaces with such a configuration of curves via the notion of an \emph{Okamoto-Painlev\'e pair} \cite{ST2002, STT2002}. This is a pair $(S, D)$ of a smooth rational surface $S$ and a representative $D$ of its anticanonical divisor class which is of \emph{canonical type}. That is, its decomposition into irreducible components $D = \sum_{i=1}^r m_i D_i$ is such that 
\begin{equation}
\text{deg} \left.D \right|_{D_i} = D \cdot D_i = 0, \quad \quad \text{for all }i = 1, \dots r.
\end{equation}
Sakai used this idea to great effect in the formulation of discrete Painlev\'e equations, defining \emph{generalised Halphen surfaces} of index zero as rational surfaces which have a unique representative of the anticanonical class of canonical type \cite{SAKAI2001}. Families of such surfaces admit affine Weyl group symmetries, from which discrete Painlev\'e equations can be constructed in an explicit way. Further, the surfaces are classified into 22 types according to the intersection configuration and homology of the irreducible components of their anticanonical divisors, which is now regarded as the definitive classification scheme for second-order discrete systems of Painlev\'e type.\\

In the years since the publication of Sakai's paper \cite{SAKAI2001}, many known discrete analogues of the Painlev\'e equations have been found to fit naturally into the geometric framework. Studying examples in this way has led to a better understanding of the range of equations associated with each surface type on the list. For example, study of the geometry of an elliptic difference equation \cite{RCG2009,AHJN2016,JN2017}, obtained as a reduction of the discrete Krichever-Novikov equation \cite{ADLER1998, AS2004} (Q4 in the Adler-Bobenko-Suris classification \cite{ABS2003}), led to the construction of a new elliptic discrete Painlev\'e equation, demonstrating the non-uniqueness of the example presented in \cite{MSY2003,SAKAI2001} associated with the elliptic surface type at the top of Sakai's list. Further, while in Sakai's theory discrete Painlev\'e equations are defined to correspond to elements of the translation part of the symmetry group of a family of surfaces, the degeneration from a $q$-discrete analogue of $\pain{III}$ \cite{RGKT2000} to a $q$-discrete $\pain{II}$ \cite{RG1996} was formulated in the geometric setting as the process of \emph{projective reduction}, in which discrete systems arise from elements of infinite order which are not  translations in the affine Weyl group sense. \\

Here we will briefly recount this important example, following \cite{KNT2011}, as it will illustrate concepts relevant to the interpretation of our results. Consider the extended affine Weyl group 
\begin{equation}
\widetilde{W}((A_2 + A_1)^{(1)}) = \left< s_0, s_1, s_2 , w_0 , w_1 , \sigma \right>,
\end{equation}
defined by the fundamental relations
\begin{subequations}
\begin{align}
s_i^2 = (s_i &s_{i+1})^3 = 1, \quad w_j^2 = (w_j w_{j+1})^{\infty} = 1, \quad (s_i w_j)^2 = 1, \\
&\sigma^6 = 1, \quad s_i \sigma = \sigma s_{i+1}, \quad w_j \sigma = \sigma w_{j+1},
\end{align}
\end{subequations}
where $i \in \Z / 3 \Z$, $j \in \Z / 2 \Z$, and $(w_j w_{j+1})^{\infty} = 1$ means that the element $w_j w_{j+1}$ is of infinite order. In \cite{KNT2011}, Kajiwara, Nakazono and Tsuda considered a left-action of this group by birational mappings on the field of rational functions of parameters $a_0, a_1, a_2, c$, and variables $f_0, f_1, f_2$ subject to the constraint
\begin{equation}
f_0 f_1 f_2 = a_0 a_1 a_2 c^2 .
\end{equation}
Letting $q = a_0 a_1 a_2$ and $\pi = \sigma^2$, the action of the element $T_1 = \pi s_2 s_1$ on the parameters is given by
\begin{equation}
T_1.  \left( a_0, a_1, a_2 , c \right) =  \left( q a_0, q^{-1} a_1, a_2 , c \right).
\end{equation}
The element $T_1$ is a translation by a weight of the root system $(A_2+A_1)$, and its action on the variables leads to a system of first-order $q$-difference equations. To be precise, this is given by repeated iteration of the mapping
\begin{equation}
T_1 : \left( \begin{array}{c} a_0, a_1, a_2 \\ c  \end{array} ; f_0 , f_1, f_2 \right) \mapsto \left( \begin{array}{c} q a_0, q^{-1} a_1, a_2 \\ c  \end{array} ; T(f_0) , T(f_1), T(f_2) \right),
\end{equation}
where by letting $F_n = T_1^n(f_0)$ and $G_n = T_1^n(f_1)$, we obtain
\begin{equation} \label{qP3}
G_{n+1} G_n = \frac{q c^2}{F_n} \frac{1 + q^n a_0 F_n}{q^n a_0 + F_n}, \quad \quad F_{n+1} F_{n} = \frac{q c^2}{G_{n+1}} \frac{1 + q^n a_0 a_2 G_{n+1}}{q^n a_0 a_2 + G_{n+1}},
\end{equation}
which is a previously known $q$-discrete analogue of $\pain{III}$ \cite{RGKT2000}. We note that this equation fits Sakai's definition of a discrete Painlev\'e equation associated with a family of generic $A_5^{(1)}$-surfaces. \\

A related equation, obtained as a $q$-discrete analogue of $\pain{II}$ \cite{RG1996}, is given by
\begin{equation} \label{qP2}
F_{n+1} F_{n-1} = \frac{q c^2}{F_n} \frac{1 + q^{n/2} a_0 F_n}{q^{n/2} a_0 + F_n},
\end{equation}
where $a_0, c$ are free parameters, with $q$ corresponding to the multiplicative timestep in the independent variable. The process by which this is obtained from equation \eqref{qP3} was referred to as \emph{symmetrisation} of discrete Painlev\'e equations \cite{RGH1991} after the corresponding process on the level of QRT maps, and was found to correspond in the geometric setting to structural features of the affine Weyl group and its birational representation \cite{KNT2011}, in a way which we now recall. \\

With the same birational representation of $\widetilde{W}((A_2+A_1)^{(1)})$, consider the element $R = \pi^2 s_1$, which is regarded as a half-translation because of the identity
\begin{equation}
R^2 = T_1.
\end{equation}
The action of $R$ on the parameters is given by 
\begin{equation}
R.(a_0, a_1, a_2, c) = (a_2 a_0, q^{-1} a_2 a_1 , q a_2^{-1} , c),
\end{equation}
which is not translational, meaning that the action on variables does not directly give a $q$-difference equation as in the case of $T_1$. However, restricting to the parameter subspace on which $a_2 = q^{1/2}$, the action becomes
\begin{equation}
R.(a_0, a_1 , c) = (q^{1/2} a_0, q^{-1/2} a_1, c).
\end{equation}
With the parameter specialisation, the action of $R$ on the variables $f_0, f_1$ is given by 
\begin{equation}
R (f_1) = f_0, \quad \quad R (f_0) = \frac{q c^2}{f_0 f_1} \frac{1 + a_0 f_0}{a_0 + f_0},
\end{equation}
which, because of the translational motion in parameter space, induces the $q$-difference equation \eqref{qP2}, where $F_n = R^n(f_0)$, which is said to be a \emph{projective reduction} of equation \eqref{qP3}. \\

We now make some remarks about this reduction, as well as the restriction to the parameter subspace. In the language of Sakai's theory (to be established in section 2), the system \eqref{qP3} lifts to a family of birational isomorphisms $X_{R^n.\mathbf{a}} \longrightarrow X_{R^{n+1}.\mathbf{a}}$, where the sequence $\left\{ X_{R^n.\mathbf{a}} ~|~n\in\Z\right\}$ is a family of $A_5^{(1)}$-surfaces, with $R^n.\mathbf{a} = R^n.(a_0, a_1,a_2 , c)$ being the result of acting on the parameters $n$ times. The parameters $\mathbf{a} = (a_0, a_1, a_2 , c)$ are given by the \emph{root variables} associated with a basis of the symmetry root lattice of the family. In particular, this means that the action of the symmetry group $\widetilde{W}((A_2+A_1)^{(1)})$ on the parameters will correspond in a natural way to its action by Cremona isometries on the simple root basis of the symmetry lattice $Q((A_2+A_1)^{(1)})$. This guarantees that a translation element of the symmetry group, such as $T_1$, will give a translational motion in parameter space with root variables as coordinates. With regards to the element $R$, we note that without the restriction to the parameter subspace, its birational action on parameters and variables still defines a discrete dynamical system, as $R$ is of infinite order. The fact that the motion in parameter space is not translational for general $\mathbf{a}$ means that without the parameter constraint the system will not be given directly by a $q$-difference equation using these parameters, as we will illustrate when we construct it explicitly in section 5. \\

With regards to the parameter constraint, we note that the concept of translational motion is coordinate-dependent, in the sense that an automorphism of parameter space may be translational in one coordinate system but not another. In the geometric framework, we have a canonical choice of parametrisation for each family of surfaces by root variables, which ensures that the action on these parameters of any translation element of the symmetry group will be translational, and therefore the discrete Painlev\'e equations defined by Sakai are indeed difference equations of additive, multiplicative or elliptic type. The fact that the Cremona actions of non-translation elements of infinite order still give integrable difference equations with the full number of parameters for their surface type is the central idea of the second part of the paper.

\subsection{Outline of the paper}
The paper is structured as follows. In section 2 we introduce notation and recall material from Sakai's theory in order to formulate rigorously our geometric treatment of equation \eqref{hoffmans}. In section 3 we construct its space of initial values, compute bases for the surface and symmetry root lattices and obtain the induced Cremona isometry. We then express this Cremona isometry in terms of generators of the extended affine Weyl group, and show that it is the composition of a Kac translation and a Dynkin diagram automorphism. The main result of section 4 is the construction of a generic (5-parameter) version of equation \eqref{hoffmans}. To do this, we first introduce a 5-parameter family of $D_4^{(1)}$-surfaces generalising the space of initial values for equation \eqref{hoffmans}, and obtain the Cremona action of the extended affine Weyl group $\widetilde{W}(D_4^{(1)})$ on this family, from which we recover \eqref{hoffmans} as a special case of a projective reduction. Next we obtain the parametrisation of this family by root variables, through a transformation to the form of the family of generic $D_4^{(1)}$-surfaces given in \cite{KNY2017}. With the root variable parametrisation, we use the Cremona action to construct a generic version of equation \eqref{hoffmans}, and demonstrate that it has vanishing algebraic entropy. In section 5, we construct more examples of difference equations with the maximum number of free parameters for their surface type from non-translation elements of infinite order in the symmetry group, which are shown to be integrable but not directly of additive, multiplicative or elliptic type.

\section{Preliminaries on the geometric framework}

In this section we will introduce notation and highlight some important results in preparation for our geometric treatment of equation \eqref{hoffmans}. We concentrate on summarising the theory and results of Sakai rather than providing a review of rational surfaces and affine root systems. For a more comprehensive introduction to the background for the study of discrete Painlev\'e equations, we refer the reader to one of the excellent existing treatments, e.g. \cite{KNY2017,JNS2016,DT2018,SAKAI2001}.

\subsection{Root system of type $E_8^{(1)}$ in the Picard group}
Consider $\C^2$ with coordinates $(f,g)$. Compactify this to $\p^1 \times \p^1$, with $f$ and $g$ being local affine coordinates in each $\p^1$-factor. Then consider the smooth projective surface $X$ obtained by blowing up eight (possibly infinitely near) points $p_1, \dots , p_8$ on $\p^1 \times \p^1$.\\

We denote by $\text{Pic}(X)$ the Picard group, whose elements are line bundles on $X$, with the group operation being the tensor product. As $X$ is smooth, $\Pic(X)$ is isomorphic to the divisor class group, which is the quotient of the group $\text{Div}(X)$ of Weil divisors by linear equivalence. 
Writing operations additively, $\Pic(X)$ is a free $\Z$-module of rank 10 given by
\begin{equation}
\text{Pic}(X) = \Z \mathcal{H}_f \oplus \Z \mathcal{H}_g \oplus \Z \mathcal{E}_1 \oplus \dots \oplus \Z \mathcal{E}_8,
\end{equation}
where $\mathcal{H}_f$ and $\mathcal{H}_g$ are the total transforms of divisor classes of lines of constant $f$ and $g$ respectively, while $\mathcal{E}_i, ~i=1, \dots , 8$ are the exceptional classes arising from the eight blowups. 
 \\
The intersection product on $\text{Pic}(X)$ is the symmetric bilinear pairing given by extension of the formulae
\begin{equation}
\mathcal{H}_f \cdot \mathcal{H}_f = \mathcal{H}_g \cdot \mathcal{H}_g = \mathcal{H}_f \cdot \mathcal{E}_i = \mathcal{H}_g \cdot \mathcal{E}_j = 0, \quad \mathcal{H}_f \cdot \mathcal{H}_g = 1, \quad \mathcal{E}_i \cdot \mathcal{E}_j = - \delta_{ij},
\end{equation}
where $i, j \in \left\{1, \dots, 8 \right\}$. The top wedge product of the cotangent bundle on $X$ is the canonical bundle, which is given in terms of generators by $\mathcal{K}_X = -2 \mathcal{H}_f -2 \mathcal{H}_g + \mathcal{E}_1 + \dots + \mathcal{E}_8$. The dual of this is the anticanonical bundle 
\begin{equation}
-\mathcal{K}_X= 2\mathcal{H}_f + 2\mathcal{H}_g - \mathcal{E}_1 - \E_2-\E_3-\E_4-\E_5-\E_6-\E_7-\E_8,
\end{equation}
which corresponds to the equivalence class of the pole divisors of rational $2$-forms on $X$.\\

We will be concerned with rational surfaces which admit root system structures in $\Pic(X)$. For a root system of type $R$ (usually referred to by its Dynkin diagram), we denote the root lattice by $Q(R)$, the weight lattice by $P(R)$, and in the affine case, the null root by $\delta \in Q(R)$. We quote the following observation of Sakai \cite{SAKAI2001}:
\begin{proposition}
For $X$ as above, the Picard group equipped with the negative of the intersection pairing is isomorphic, as a free $\Z$-module with symmetric bilinear product, to the Lorentzian lattice of rank $10$. Further, the orthogonal complement in $\Pic(X)$ of the canonical class is isomorphic  to the root lattice $Q(E_8^{(1)})$, with the null root $\delta \in Q(E_8^{(1)})$ identified with the anticanonical class $- \mathcal{K}_X$. 
\end{proposition}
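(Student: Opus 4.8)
The plan is to trivialise both claims by passing from the given $\p^1\times\p^1$ basis to the basis associated with the isomorphic presentation of $X$ as $\p^2$ blown up at nine points, in which the intersection form diagonalises. First I would set
\begin{equation}
H = \mathcal{H}_f + \mathcal{H}_g - \mathcal{E}_1, \quad E_1 = \mathcal{H}_f - \mathcal{E}_1, \quad E_2 = \mathcal{H}_g - \mathcal{E}_1, \quad E_{i+1} = \mathcal{E}_i \ \ (2 \le i \le 8),
\end{equation}
and verify from the inverse relations $\mathcal{H}_f = H - E_2$, $\mathcal{H}_g = H - E_1$, $\mathcal{E}_1 = H - E_1 - E_2$ that this is a unimodular change of $\Z$-basis. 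A short computation from the stated intersection formulae then gives $H^2 = 1$, $E_i\cdot E_j = -\delta_{ij}$ and $H\cdot E_i = 0$, so that in the basis $(H, E_1, \dots, E_9)$ the negative of the intersection pairing is $\mathrm{diag}(-1, 1, \dots, 1)$. This is manifestly the odd unimodular Lorentzian lattice of rank $10$ and signature $(9,1)$, which is the first assertion. (Alternatively one could bypass the explicit basis, noting that the Gram matrix is unimodular, odd, and indefinite of signature $(9,1)$, and then invoke the classification of indefinite unimodular lattices.) In the new basis the anticanonical class becomes $-\mathcal{K}_X = 3H - E_1 - \cdots - E_9$, which is primitive and satisfies $\mathcal{K}_X^2 = 0$.

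For the second assertion I would work inside $\mathcal{K}_X^\perp$, which has rank $9$. Because $\mathcal{K}_X$ is a primitive isotropic vector of a Lorentzian lattice, the negative pairing restricts to a positive semidefinite form on $\mathcal{K}_X^\perp$ whose radical is exactly $\Z\mathcal{K}_X$; hence $\mathcal{K}_X^\perp/\Z\mathcal{K}_X$ is a rank-$8$ positive definite even lattice, and it remains only to pin down a root basis. I would take
\begin{equation}
\alpha_0 = H - E_1 - E_2 - E_3, \qquad \alpha_i = E_i - E_{i+1} \ \ (1 \le i \le 8),
\end{equation}
check that each has self-intersection $-2$ and lies in $\mathcal{K}_X^\perp$, and compute the Gram matrix under the negative pairing: the diagonal entries are all $2$ and the off-diagonal entries are $0$ or $-1$ following the $E_8^{(1)}$ Dynkin diagram (the chain $\alpha_1 - \cdots - \alpha_8$ with $\alpha_0$ attached at $\alpha_3$, giving arms of lengths $1, 2, 5$). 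To see that these roots generate $\mathcal{K}_X^\perp$ over $\Z$, and not merely a finite-index sublattice, I would note that a general element $x H + \sum_i y_i E_i$ lies in $\mathcal{K}_X^\perp$ iff $3x + \sum_i y_i = 0$; subtracting $x\alpha_0$ then reduces it to an element of $\{\, \sum_i y_i' E_i : \sum_i y_i' = 0 \,\}$, which is precisely the $\Z$-span of $\alpha_1, \dots, \alpha_8$. Finally, solving $-\mathcal{K}_X = \sum_i a_i \alpha_i$ recovers the affine marks $(a_0, \dots, a_8) = (3, 2, 4, 6, 5, 4, 3, 2, 1)$ and in particular identifies $\delta = -\mathcal{K}_X$, completing the isomorphism with $Q(E_8^{(1)})$.

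The routine parts here are the intersection-number computations; the step I expect to require the most care is the identification at the level of \emph{lattices with form}, rather than merely abstract $\Z$-modules. Concretely, one must confirm that the chosen Gram matrix is exactly the symmetric bilinear form defining $Q(E_8^{(1)})$ (including that the diagram has the affine $E_8$ shape and not some other rank-$9$ affine type), and that the degenerate structure matches, that is, the form on $\mathcal{K}_X^\perp$ is positive semidefinite with one-dimensional radical spanned by $-\mathcal{K}_X = \delta$. This last point is what genuinely uses the Lorentzian signature of $\Pic(X)$ established in the first half, and is the reason the isotropy $\mathcal{K}_X^2 = 0$ translates into the affine, rather than finite, root-lattice structure.
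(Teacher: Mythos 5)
Your proof is correct, but there is nothing to compare it with inside the paper: the paper does not prove this proposition. It is quoted verbatim as an observation of Sakai, with a citation to his 2001 paper, and the surrounding section treats it purely as background for the geometric framework. What you have written is therefore a complete, self-contained proof where the paper relies on a reference, and it is essentially the standard argument from the rational-surfaces literature: the unimodular change of basis realising $X$ as a nine-point blowup of $\p^2$, under which the negative intersection form becomes $\mathrm{diag}(-1,1,\dots,1)$, followed by exhibiting the simple roots $\alpha_0=H-E_1-E_2-E_3$, $\alpha_i=E_i-E_{i+1}$ inside $\mathcal{K}_X^{\perp}$. I checked your details and they hold: the inverse relations confirm unimodularity; the Gram matrix of your nine roots is the $E_8^{(1)}$ generalised Cartan matrix (trivalent node $\alpha_3$, arms of lengths $1$, $2$, $5$); the reduction of an arbitrary element of $\mathcal{K}_X^{\perp}$ by $x\alpha_0$ into the sublattice $\{\sum_i y_i' E_i : \sum_i y_i'=0\}$ correctly shows that the roots generate over $\Z$ rather than merely up to finite index; and the coefficients $(3,2,4,6,5,4,3,2,1)$ you solve for are exactly the marks of $E_8^{(1)}$ in your labelling, giving $\delta=-\mathcal{K}_X$. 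The two points you single out as delicate — that the isomorphism is one of lattices with bilinear form, not just of $\Z$-modules, and that primitivity plus isotropy of $\mathcal{K}_X$ forces the affine (positive semidefinite with radical $\Z\mathcal{K}_X$) rather than finite structure — are indeed where the content lies, and your handling of both is sound. The only thing the paper's route buys is brevity; yours buys self-containedness.
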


\subsection{Generalised Halphen surfaces of index zero}
Sakai defined a class of complex projective surfaces generalising Okamoto's space of initial values, and classified such surfaces into 22 types. Sakai defined a \emph{generalised Halphen surface of index zero} to be a rational surface whose anticanonical class is effective and of dimension zero, with the unique representative $D \in |-\mathcal{K}_X |$ being of canonical type. That is, its decomposition into irreducible components
\begin{equation}
D = \sum_{i \in I} m_i D_i, 
\end{equation}
is such that $- \mathcal{K}_X \cdot \delta_i = 0$ for all $i$, where $\delta_i = [D_i] \in \text{Pic}(X)$ are the divisor classes of the irreducible components. This definition leads to two important complementary root sublattices of $Q(E_8^{(1)}) \subset \Pic(X)$, as shown by the following result of Sakai \cite{SAKAI2001}.
\begin{proposition}
For a generalised Halphen surface $X$ of index zero, the irreducible components of the unique anticanonical divisor define a basis of simple roots $\left\{\delta_i \right\}_{i \in I}$ for an indecomposable root system in $\text{Pic}(X)$, with generalised Cartan matrix $A = (a_{ij})$ given by
\begin{equation}
a_{ij} = - \delta_i \cdot \delta_j .
\end{equation}
This root system is of affine type, with null root identified with the anticanonical class: 
\begin{equation}
\delta = - \mathcal{K}_X = \sum_{i \in I} m_i \delta_i.
\end{equation}
Further, if we denote the associated root lattice by $Q(R) = \bigoplus_{i \in I} \Z \mathcal{D}_i \subset Q(E_8^{(1)})$, the orthogonal complement $Q(R)^{\perp} \subset Q(E_8^{(1)})$ is another root lattice $Q(R^{\perp})$, which is also of affine type. The possible pairs of complementary root systems $R, R^{\perp}$ are classified according to indecomposable root subsystems of $E_8^{(1)}$. The possible surface root system types $R$ are shown in Figure 1, with arrows indicating degenerations of point configurations through which certain surface types may be obtained from others.
\end{proposition}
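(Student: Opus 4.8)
The plan is to extract everything from the preceding proposition, which identifies $\mathcal{K}_X^{\perp}$ with $Q(E_8^{(1)})$ and the anticanonical class $-\mathcal{K}_X$ with the null root $\delta$; recall that the canonical form on an affine root lattice is positive semidefinite with radical spanned by $\delta$, so under $-(\,\cdot\,)$ the pairing on $\mathcal{K}_X^{\perp}$ is positive semidefinite with radical $\Z\delta$. First I would note that the canonical-type condition $-\mathcal{K}_X\cdot\delta_i=0$ puts every component class $\delta_i$ in $\mathcal{K}_X^{\perp}=Q(E_8^{(1)})$, and that adjunction together with this condition gives $\delta_i^2=2p_a(D_i)-2$. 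Semidefiniteness forces $\delta_i^2\le 0$, so $p_a(D_i)\le 1$ and $\delta_i^2\in\{-2,0\}$, with $\delta_i^2=0$ precisely when $\delta_i$ lies in the radical $\Z\delta$.

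Next I would establish that $D$ is connected, which both disposes of the boundary case and yields indecomposability. From the structure sequence $0\to\mathcal{O}_X(\mathcal{K}_X)\to\mathcal{O}_X\to\mathcal{O}_D\to 0$, Serre duality and $H^1(\mathcal{O}_X)=0$ for a rational surface give $H^1(\mathcal{O}_X(\mathcal{K}_X))=0$, and since $H^0(\mathcal{O}_X(\mathcal{K}_X))=0$ the long exact sequence yields $H^0(\mathcal{O}_D)\cong H^0(\mathcal{O}_X)=\C$, so $\mathrm{Supp}(D)$ is connected. If $|I|=1$ then $\delta_1=\delta$ has square $0$, the Cartan matrix is the $1\times1$ zero matrix, and the system is of type $A_0^{(1)}$. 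If $|I|\ge 2$, a component with $\delta_i\in\Z\delta$ would be orthogonal to, hence disjoint from, all the others, contradicting connectedness; therefore every $\delta_i^2=-2$ and each $\delta_i$ is a real root of $E_8^{(1)}$.

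With $\delta_i^2=-2$ I obtain $a_{ii}=2$, while for $i\neq j$ distinctness of the irreducible curves $D_i,D_j$ gives $\delta_i\cdot\delta_j\ge 0$, so $a_{ij}\le 0$, and symmetry of the intersection form makes $A=(a_{ij})$ a symmetric generalised Cartan matrix, whose indecomposability is the connectedness just proved. The canonical-type condition rewrites as $\sum_j a_{ij}m_j=-\delta_i\cdot\delta=0$, i.e.\ $A\mathbf{m}=0$ with $\mathbf{m}=(m_i)$ strictly positive. By the trichotomy of Kac for indecomposable generalised Cartan matrices, the existence of a positive vector in the kernel is equivalent to $A$ being of affine type and identifies $\mathbf{m}$ with the marks, so $\delta=\sum_i m_i\delta_i=-\mathcal{K}_X$ is the null root. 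For linear independence, $\sum_i c_i\delta_i=0$ pairs against each $\delta_j$ to give $A\mathbf{c}=0$, hence $\mathbf{c}\in\Q\mathbf{m}$; but then a rational multiple of $\delta=-\mathcal{K}_X\neq 0$ vanishes, forcing $\mathbf{c}=0$. Thus $\{\delta_i\}_{i\in I}$ is a $\Z$-basis of $Q(R)$.

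It remains to identify the orthogonal complement, which I expect to be the crux. Since $\delta$ spans the radical of the form on all of $Q(E_8^{(1)})$ it lies in both $Q(R)$ and $Q(R)^{\perp}$, so both systems share the null root. Passing to the positive-definite quotient $\overline{Q}=Q(E_8^{(1)})/\Z\delta\cong Q(E_8)$, the images $\overline{\delta_i}$ generate a finite root subsystem $\overline{R}\subset E_8$ and $Q(R)^{\perp}$ is the preimage of its orthogonal complement. The work is to show this complement is again spanned by roots forming an affine system $R^{\perp}$: I would deduce this from the structure theory of root subsystems of $E_8$, using that the orthogonal complement of $\overline{R}$ is generated by roots in the Borel--de Siebenthal sense, and that affinising along the shared null root $\delta$ produces an affine $R^{\perp}$ with $\delta$ as its null root. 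The admissible pairs $(R,R^{\perp})$ are then enumerated by running the Borel--de Siebenthal node-deletion procedure on the extended Dynkin diagram of $E_8^{(1)}$, which reproduces exactly the list of $22$ surface types in Figure 1; this combinatorial enumeration, rather than the surface geometry, is the main obstacle.
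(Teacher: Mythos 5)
The paper itself gives no proof of this proposition: it is quoted verbatim as a result of Sakai \cite{SAKAI2001}, so your attempt can only be measured against Sakai's argument and the statement itself. Your first three paragraphs are essentially correct and follow the standard route: adjunction plus the canonical-type condition $\mathcal{K}_X\cdot\delta_i=0$ forces $\delta_i^2\in\{-2,0\}$ inside the semidefinite lattice $\mathcal{K}_X^{\perp}\cong Q(E_8^{(1)})$; the sequence $0\to\mathcal{O}_X(\mathcal{K}_X)\to\mathcal{O}_X\to\mathcal{O}_D\to 0$ with $h^0(\mathcal{K}_X)=h^1(\mathcal{K}_X)=0$ gives connectedness, hence indecomposability; and Kac's trichotomy applied to $A\mathbf{m}=0$, $\mathbf{m}>0$, gives affineness, linear independence, and the null-root identification. (One small omission: to conclude $\mathbf{m}$ equals the marks rather than a positive multiple of them you need primitivity of $\mathcal{K}_X$ in $\Pic(X)$, immediate from the coefficient $-1$ of $\mathcal{E}_1$.)

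The genuine gap is in your last paragraph. It is false that the orthogonal complement of a root sublattice of $E_8$ is generated by roots, and Borel--de Siebenthal theory does not assert it: that theory classifies closed subsystems of maximal rank via node deletion from the extended diagram, which is a different statement. Concretely, for $D_7\subset E_8$ (realised as $D_7\subset D_8\subset E_8$) the orthogonal complement is $\Z v$ with $v^2=4$ and contains no roots at all; one of the two Weyl-inequivalent embeddings of $A_7$ has complement generated by a vector of norm $8$; the complement of $A_6$ is $A_1\oplus\Z w$ with $w^2=14$; and $A_8$, $D_8$ have rank-zero complements. Pulled back to $Q(E_8^{(1)})$, these are exactly the degenerate surface types $A_6^{(1)}, A_7^{(1)}, A_7^{(1)'}, A_8^{(1)}, D_7^{(1)}, D_8^{(1)}$, for which $Q(R)^{\perp}$ is spanned by $\delta$ together with vectors of square $-4$, $-8$ or $-14$ and is \emph{not} an affine root lattice in the classical sense --- precisely why the paper's Theorem 2.4 excludes these six types, and why Sakai's own formulation records these anomalous lattices case by case rather than asserting the blanket statement. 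So your strategy can at best prove the complement claim for the non-degenerate types; the actual classification of pairs $(R,R^{\perp})$ proceeds by enumerating embeddings of $Q(R)$ into $Q(E_8^{(1)})$ (equivalently of the finite part into $E_8$ up to the Weyl group) and computing each complement directly. A smaller slip: Figure 1 lists only the possible root system types $R$ (eighteen of them); the count of $22$ surface types in Table 1 additionally uses $\operatorname{rank} H_1(D_{\text{red}};\Z)$, i.e.\ the elliptic/multiplicative/additive trichotomy, which is geometric data invisible to the lattice theory.
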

\begin{figure}[h] \label{Rtable}
\begin{center}
\begin{tikzcd}[row sep=small, column sep=small]
& &  &  &  &  & &  A_7^{(1)} \arrow[dr] \\
A_0^{(1)} \arrow[r] & A_1^{(1)} \arrow[r] & A_2^{(1)} \arrow[r] & A_3^{(1)} \arrow[r] \arrow[dr] & A_4^{(1)} \arrow[r] \arrow[dr] & A_5^{(1)} \arrow[r] \arrow[dr] \arrow[ddr] & A_6^{(1)} \arrow[r] \arrow[dr] \arrow[ur] & A_7^{(1)'} \arrow[bend left, dd] \arrow[dr] & A_8^{(1)} \arrow[bend left, dd] \\
& &  &  & D_4^{(1)} \arrow[r] & D_5^{(1)} \arrow[r] \arrow[dr] & D_6^{(1)} \arrow[r]  \arrow[dr] & D_7^{(1)} \arrow[r] & D_8^{(1)} \\
& &  &  &  &  & E_6^{(1)} \arrow[r] & E_7^{(1)}\arrow[r] & E_8^{(1)}
\end{tikzcd}
\end{center}
\caption{Surface root system type $R$ for generalised Halphen surfaces of index zero}
\end{figure}
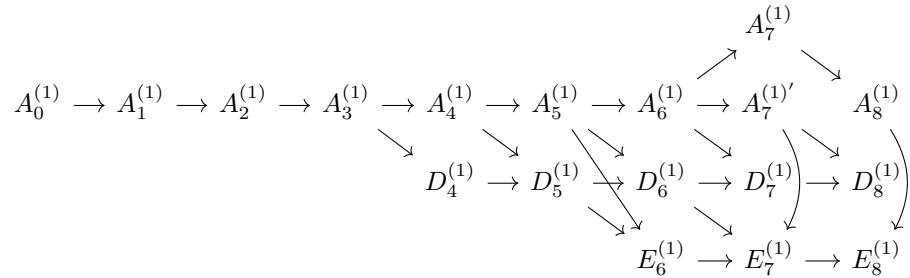
The classification given by Sakai of generalised Halphen surfaces of index zero is finer than the type $R$ of the root system defined by the components of $D$. By further classifying $X$ with unique anticanonical divisor $D = \sum_i m_i D_i$ according to $\text{rank} H_1\left( \sum_i D_i,\Z \right)$, we differentiate between families of surfaces whose associated discrete Painlev\'e equations are of elliptic, multiplicative and additive type. The 22 possible surface types are shown in Table 1, and we refer to them by type as $\mathcal{R}$-surfaces.

\renewcommand{\arraystretch}{1.5}
\begin{center}
\begin{table}[h] \label{surfacetable}
\begin{tabular}{ c | >{$}c<{$} }
Elliptic type & A_0^{(1)} \\ \hline   
Multiplicative type & A_0^{(1)*}, A_1^{(1)}, A_2^{(1)}, \dots, A_7^{(1)}, A_7^{(1)'}, A_8^{(1)} \\ \hline   
Additive type & A_0^{(1)**} , A_1^{(1)*}, A_2^{(1)*}, D_4^{(1)} , \dots, D_8^{(1)} \\
~ & E_6^{(1)} , E_7^{(1)}, E_8^{(1)} 
\end{tabular}
\caption{Classification of generalised Halphen surfaces index zero by surface type $\mathcal{R}$}
\label{GHsurfaces}
\end{table}
\end{center}
\renewcommand{\arraystretch}{1}
Surface types can be described in terms of configurations of nine points in $\p^2$, or equivalently eight points in $\p^1 \times \p^1$, which after blowups will give an anticanonical divisor with the required decomposition. For example, the elliptic $A_0^{(1)}$-surface may be obtained by blowing up eight points in general position on an irreducible elliptic curve in $\p^1 \times \p^1$ of bi-degree $(2,2)$, with the anticanonical divisor given by the proper transform of the curve. The most general form of such a surface (up to M\"obius transformation in each $\p^1$-factor) has eight free parameters $b_1, \dots b_8$ controlling the locations of the basepoints. Indeed, all but the most degenerate point configuration corresponding to surface type $E_8^{(1)}$ involve one or more free parameters, and we call a family of $\mathcal{R}$-surfaces with the maximum number of parameters for their type a \emph{family of generic }$\mathcal{R}$\emph{-surfaces}. \\

For any family $X_{\mathbf{b}}$ of $\mathcal{R}$-surfaces indexed by some list $\mathbf{b}$ of parameters, we may naturally identify their Picard groups and components of their anticanonical divisors to form a single $\Z$-module, which we also denote $\Pic(X)$. Special automorphisms \cite{DOLGACHEV1983, LOOIJENGA1981} of this $\Z$-module will correspond to symmetries of the family of surfaces, which we now describe.

\begin{definition}
Let $X_{\mathbf{b}}$ be a family of $\mathcal{R}$-surfaces, and let $\Pic(X)$ be the identification of the Picard groups $\Pic(X_{\mathbf{b}})$ as above. A linear map 
\begin{equation}
\varphi : \Pic(X) \rightarrow \Pic(X)
\end{equation}
is called a \emph{Cremona isometry} of the family of $\mathcal{R}$-surfaces if it:
\begin{enumerate}
\item preserves the intersection form on $\text{Pic}(X)$,
\item leaves the canonical class $\mathcal{K}_X$ fixed,
\item preserves effectiveness of divisor classes.
\end{enumerate}
\end{definition}
For each surface type $\mathcal{R}$, Sakai described the group $\text{Cr}(X(\mathcal{R}))$ of Cremona isometries of a family of generic $\mathcal{R}$-surfaces. We quote the part of this result relevant to this paper:
\begin{theorem}
For $\mathcal{R} \neq A_6^{(1)}, A_7^{(1)}, A_7^{(1)'}, A_8^{(1)}, D_7^{(1)}$ or $D_8^{(1)}$, the group of Dynkin diagram automorphisms $\text{Aut}(\mathcal{R})$ acts on $\Pic(X)$, and 
\begin{equation}
\text{Cr}(X(\mathcal{R})) \cong \left( \widetilde{W}(R^{\perp}) \right)_{\Delta^{\text{nod}}}.
\end{equation}
Here the right-hand side is the part of the extended affine Weyl group of the symmetry root system $R$ which stabilises the set of classes of nodal curves $\Delta^{\text{nod}} \subset \Pic(X)$. 
\end{theorem}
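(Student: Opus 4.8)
The plan is to analyse an arbitrary Cremona isometry $\varphi$ through its action on the complementary sublattices $Q(R)$ and $Q(R^\perp)$ of $Q(E_8^{(1)}) \subset \Pic(X)$, and to match the group it generates with the extended affine Weyl group of the symmetry system, cut down by the effectiveness condition. First I would note that conditions (1)--(2) are already very restrictive: since $\varphi$ preserves the intersection form and fixes $\mathcal{K}_X$, it preserves $\mathcal{K}_X^\perp = Q(E_8^{(1)})$ and fixes the null root $\delta = -\mathcal{K}_X$, so it restricts to an isometry of the affine root lattice fixing $\delta$. Any such map permutes the classes of self-intersection $-2$ lying in $Q(E_8^{(1)})$, which are exactly the real roots of $E_8^{(1)}$; hence $\varphi$ is an automorphism of this affine root system, and since the $E_8^{(1)}$ diagram is rigid it lies in $W(E_8^{(1)})$.

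Next I would bring in effectiveness, condition (3). It forces $\varphi$ to permute the classes $\delta_i = [D_i]$ of the components of the unique anticanonical divisor, which for a generic surface are exactly the nodal classes $\Delta^{\text{nod}}$; as $\varphi$ preserves the Cartan integers $-\delta_i \cdot \delta_j$, this permutation is a Dynkin diagram automorphism of the surface system $R$, and in particular $\varphi$ preserves $Q(R)$ and its orthogonal complement $Q(R^\perp)$. This already furnishes one family of generators: the reflections $s_\alpha$ in the roots $\alpha$ of $R^\perp$ fix $Q(R)$ pointwise by orthogonality, hence fix $\mathcal{K}_X$ and each $\delta_i$ and preserve effectiveness, so $W(R^\perp) \subseteq \text{Cr}(X(\mathcal{R}))$.

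The key structural step is to realise the extended part and to control the coupling between the two lattices, using the complementarity of $R$ and $R^\perp$ inside $E_8^{(1)}$. I would establish: (a) each diagram automorphism in $\text{Aut}(\mathcal{R})$ lifts to a Cremona isometry $g_\tau$ (it permutes the $\delta_i$ preserving the marks $m_i$, hence fixes $\delta = -\mathcal{K}_X$, and one checks it preserves effectiveness), and these lifts realise exactly the automorphisms extending $W(R^\perp)$ to $\widetilde{W}(R^\perp)$; and (b) conversely, any isometry fixing $Q(R)$ pointwise acts on $Q(R^\perp)$ through the Weyl group $W(R^\perp)$. Granting these, the forward inclusion follows: for $\varphi \in \text{Cr}(X(\mathcal{R}))$ with diagram-automorphism part $\tau$ on $Q(R)$, the composite $\varphi g_\tau^{-1}$ fixes $Q(R)$ pointwise, hence lies in $W(R^\perp)$ by (b), so $\varphi \in \widetilde{W}(R^\perp)$; since $\varphi$ also stabilises $\Delta^{\text{nod}}$ we obtain $\text{Cr}(X(\mathcal{R})) \subseteq (\widetilde{W}(R^\perp))_{\Delta^{\text{nod}}}$. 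The reverse inclusion amounts to checking that every element of $(\widetilde{W}(R^\perp))_{\Delta^{\text{nod}}}$ preserves effectiveness, which I would obtain from Riemann--Roch, reducing effectiveness of an arbitrary class to its intersections with $-\mathcal{K}_X$ and with the nodal classes, both of which are preserved.

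I expect the main obstacle to be the precise determination of which diagram automorphisms lift to Cremona isometries, that is, the role of the subscript $\Delta^{\text{nod}}$ and the appearance of exactly the excluded types $A_6^{(1)}, A_7^{(1)}, A_7^{(1)'}, A_8^{(1)}, D_7^{(1)}$ and $D_8^{(1)}$. For these configurations some graph symmetry of $R$ fails to stabilise the set of classes of nodal curves, or equivalently fails to extend to an isometry preserving effectiveness on all of $\Pic(X)$, so $\text{Aut}(\mathcal{R})$ does not act in full and $\text{Cr}(X(\mathcal{R}))$ is strictly smaller than $\widetilde{W}(R^\perp)$. Establishing this cut-off, together with the coupling claim (b), therefore demands a type-by-type inspection of the twenty-two configurations of points, computing $\Delta^{\text{nod}}$ in each case and testing each candidate automorphism; the remaining verifications (that reflections and the realised automorphisms preserve effectiveness) are routine once the nodal classes have been described explicitly.
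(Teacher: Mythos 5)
First, a point of comparison that matters: the paper does not prove this statement at all. Theorem 2.4 is quoted from Sakai \cite{SAKAI2001} (``We quote the part of this result relevant to this paper''), so there is no in-paper proof to measure your plan against; it can only be judged against Sakai's original argument, whose broad skeleton (isometries fixing $\mathcal{K}_X$ restrict to $Q(E_8^{(1)})$; effectiveness and uniqueness of the anticanonical divisor force a permutation of the $\delta_i$; $W(R^{\perp})$ enters via orthogonality; Riemann--Roch controls effectiveness for the converse) your outline does broadly reproduce. However, your plan contains a genuine error: you identify $\Delta^{\text{nod}}$ with the classes $\delta_i = [D_i]$ of the components of $D$. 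The nodal curves relevant to this theorem are the irreducible $(-2)$-curves which are \emph{not} components of $D$: such a curve $C$ satisfies $C \cdot D = -C \cdot \mathcal{K}_X = 0$, hence is disjoint from $D$, hence its class is an \emph{effective real root of the symmetry lattice} $Q(R^{\perp})$ (note the $\delta_i$ themselves cannot lie in $Q(R^{\perp})$, since $\delta_i \cdot \delta_i = -2 \neq 0$). This is what gives the subscript its content: the reflection $r_{\alpha}$ in an effective root $\alpha \in \Delta^{\text{nod}}$ sends $\alpha \mapsto -\alpha$ and so destroys effectiveness. Consequently your step ``the reflections in the roots of $R^{\perp}$ \dots preserve effectiveness, so $W(R^{\perp}) \subseteq \text{Cr}(X(\mathcal{R}))$'' is precisely the nontrivial point of the whole theorem, is false in the presence of nodal curves, and is asserted with no justification beyond orthogonality (which only yields fixing of $\mathcal{K}_X$ and of the $\delta_i$). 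Under your reading the stabiliser condition is vacuous --- as you yourself observe, $W(R^{\perp})$ fixes every $\delta_i$ --- so your version of the statement collapses to $\text{Cr}(X(\mathcal{R})) \cong \widetilde{W}(R^{\perp})$ with no condition, which is not the theorem.

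The second gap concerns the excluded types, which you rightly flag as the main obstacle but propose to settle by the wrong test. You declare ``fails to stabilise the set of classes of nodal curves'' and ``fails to extend to an isometry preserving effectiveness on all of $\Pic(X)$'' to be equivalent; they are not, and the first cannot be the mechanism. For $R = A_8^{(1)}$ or $D_8^{(1)}$ the lattice $Q(R^{\perp})$ is spanned by $\delta$ alone, so it contains no real roots, no nodal classes can exist, and stabilising $\Delta^{\text{nod}} = \emptyset$ is never a condition --- yet these are among the excluded types, and the stabiliser formula would force $\text{Cr}$ to be essentially trivial there, while these surfaces do admit nontrivial Cremona isometries permuting the $\delta_i$. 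So the two conditions you conflate come apart exactly where the theorem needs them, and the true question is a lattice-extension one: which permutations of the $\delta_i$ extend to isometries of the unimodular lattice $\Pic(X)$ compatibly with the gluing of the finite-index sublattice $Q(R) + Q(R^{\perp}) \subset Q(E_8^{(1)})$, and whether the resulting extensions land inside the realisation of $\widetilde{W}(R^{\perp})$. The same lattice-theoretic technology (the results of Dolgachev and Looijenga that Sakai invokes) is what underwrites your intermediate claim (b), that an isometry fixing $Q(R)$ pointwise acts on $Q(R^{\perp})$ through $W(R^{\perp})$; it is not a routine verification. As it stands, your plan reproduces the easy half of the argument but misidentifies the set being stabilised and leaves unproved the two statements carrying the theorem's actual content: effectiveness-preservation for the Weyl group, and realisability of $\text{Aut}(\mathcal{R})$ for exactly the non-excluded types.
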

Thus we have an action of $\widetilde{W}(R^{\perp})$ on $\Pic(X)$ defined as the extension of the usual one on the root lattice $Q(R^{\perp})$, where the reflection associated to the root $\alpha \in R^{\perp}$ is given by 
\begin{equation}
r_{\alpha}(\lambda) = \lambda - 2\frac{\alpha \cdot \lambda}{\alpha \cdot \alpha} \alpha,
\end{equation}
where $\lambda \in \Pic(X)$, and Dynkin diagram automorphisms act on $Q(R^{\perp})$ in the natural way induced by the permutation of simple roots.

\subsection{Cremona action and discrete Painlev\'e equations}

In Sakai's theory, discrete Painlev\'e equations are constructed through a \emph{Cremona action} of the symmetry group, which realises the Cremona isometries as pullbacks of birational maps between specific members of a family of $\mathcal{R}$-surfaces. The fact that the birational maps are between different elements of the family of surfaces is essential for obtaining non-autonomous difference equations. \\

The Cremona action is most conveniently described using the root variable parametrisation of the family of generic $\mathcal{R}$-surfaces, in which the parameters are associated to simple roots in $Q(R^{\perp}) \subset \Pic(X)$, as we now illustrate.
\begin{definition}
Consider a generalised Halphen surface $X$ with unique anticanonical divisor $D = \sum_i m_i D_i$, and with surface and symmetry root systems of type $R$ and $R^{\perp}$ respectively. Let $\omega$ be a rational $2$-form on $X$ such that $\div (\omega)=-D$, and denote $D_{\text{\emph{red}}} = \sum_i D_i$. Then from relative homology of the pair $(X , X - D_{\text{\emph{red}}})$ and Poincar\'e duality on $D_{\text{\emph{red}}}$ we have a short exact sequence:
\begin{equation}
0 \longrightarrow H_1(D_{\text{\emph{red}}} ; \Z) \longrightarrow H_2(X-D_{\text{\emph{red}}} ; \Z) \longrightarrow Q(R^{\perp}) \longrightarrow 0.
\end{equation}
The isomorphism given by this  sequence, together with the map
\begin{equation}
\begin{aligned}
\hat{\chi} : &~ H_2(X -D_{\text{\emph{red}}} ; \Z) \rightarrow \C, \\
&\quad \Gamma \mapsto \int_{\Gamma} \omega,
\end{aligned}
\end{equation}
defines the period mapping
\begin{equation}
\chi : Q(R^{\perp}) \rightarrow \C \mod \hat{\chi}( H_1(D_{\text{\emph{red}}} ; \Z)).
\end{equation}
\end{definition}
In practice, we choose the rational $2$-form $\omega$ according to a normalisation such that the period mapping is a well-defined function. The kind of normalisation required depends on the rank of $H_1(D_{\text{red}} ; \Z)$, and therefore whether the $\mathcal{R}$-surface is of elliptic, multiplicative or additive type. The period mapping allows us to construct the \emph{root variable parametrisation} of a family of generic $\mathcal{R}$-surfaces. Pick a basis of simple roots $\left\{\alpha_0, \dots, \alpha_l \right\} \subset Q(R^{\perp})$ and define the \emph{simple root variables} as
\begin{equation}
a_j = \chi (\alpha_j).
\end{equation}
Together with the `extra parameter' corresponding to the independent variable of the continuous Painlev\'e equation in the cases $\mathcal{R} = D_4^{(1)}, D_5^{(1)}, D_6^{(1)}, D_7^{(1)}, D_8^{(1)}, E_6^{(1)}, E_7^{(1)}$ and $E_8^{(1)}$, the root variables allow us to parametrise the family of generic $\mathcal{R}$-surfaces.\\

We are now ready to describe the Cremona action of the group $\widetilde{W}(R^{\perp})$. This consists of birational maps constructed from changes of blowing-down structures of generic $\mathcal{R}$-surfaces which induce Cremona isometries through their pullbacks. Such a change of blowing-down structure induces a change in the root variables of the surface and therefore a change in parameters in the form of a generic $\mathcal{R}$-surface, so can be regarded as a birational map between different members of a family of surfaces. Thus the Cremona action can be described as an action of the extended affine Weyl group $\widetilde{W}(R^{\perp})$ on the parameters $\mathbf{a} = \left\{a_0, \dots, a_l \right\}$ and local affine coordinates $(f,g)$ for $X_{\mathbf{a}}$, which play the role of variables in the discrete Painlev\'e equations. This action defines birational maps
\begin{equation}
\begin{aligned}
&w :~ X_{\mathbf{a}} \rightarrow X_{w. \mathbf{a}}, \\
&~~~(f,g) \mapsto (w.f, w.g),
\end{aligned}
\end{equation}
which give the actions of $\widetilde{W}(R^{\perp})$ by Cremona isometries by their pullbacks, when the Picard groups of the surfaces are identified with the module $\Pic(X)$.\\


Sakai's discrete Painlev\'e equations arise from the translation part of $\widetilde{W}(R^{\perp})$, which we now describe. As $R^{\perp}$ is of affine type, the root system obtained by deleting the $0$th row and column from the generalised Cartan matrix is of finite type, which we denote $R_{\circ}^{\perp}$. The affine Weyl group $W(R^{\perp})$ (not extended by the Dynkin diagram automorphisms $\text{Aut}(R^{\perp})$) can be expressed as the semi-direct product
\begin{equation}
W(R^{\perp}) \cong W(R_{\circ}^{\perp}) \ltimes Q(R_{\circ}^{\perp}).
\end{equation}
The translation part of the extended affine Weyl group comes similarly from the semi-direct product
\begin{equation}
\widetilde{W}(R^{\perp}) \cong W(R_{\circ}^{\perp}) \ltimes P(R_{\circ}^{\perp}),
\end{equation}
where the weight lattice $P(R_{\circ}^{\perp})$ gives Cremona isometries via \emph{Kac translations} \cite{KAC}, according to the formula
\begin{equation} \label{kactrans}
T_{v} (\lambda) = \lambda - (\lambda \cdot \delta) v - \left( \frac{1}{2} (v \cdot v) (\lambda \cdot \delta) - \lambda \cdot v \right) \delta,
\end{equation}
for any $\lambda \in \Pic(X)$, where $v \in P(R_{\circ}^{\perp})$ is the translation vector. We are now ready to formally define Sakai's discrete Painlev\'e equations.
\begin{definition}
A discrete Painlev\'e equation of surface type $\mathcal{R}$ is a second-order difference equation given by the Cremona action of a Kac translation $T_v$, for some $v \in P(R_{\circ}^{\perp})$.

\end{definition}
In other words, we call a second-order difference equation a discrete Painlev\'e equation if it lifts to a family of birational isomorphisms between $\mathcal{R}$-surfaces given by the Cremona action of a Kac translation by an element of the weight lattice associated with the symmetry root system.\\

\section{Space of initial values and Cremona isometry}

We now give a geometric treatment of equation \eqref{hoffmans}, beginning with the construction of its space of initial values.

\begin{proposition}
The system \eqref{hoffmans} lifts to a family of birational isomorphisms 
\begin{equation}
\Phi_{n} : X_{(k,n)} \longrightarrow X_{(k,n+1)},
\end{equation}
where for each $n\in \Z$ the surface $X_{(k,n)}$ is obtained from $\p^1 \times \p^1$ by blowing up the points $p_1, \dots, p_8$ given in coordinates by
\begin{subequations} \label{pointlocations}
\begin{align}
p_1 : (f,g) &= (k,0), &&\quad p_5 : (u_1,v_1) = \left(\frac{f-k}{g} , g \right) = \left( (k^2-1)(2n+1), 0 \right), \\  
p_2 : (F,G) &= (k, 0), &&\quad p_6 : (u_2,v_2) = \left(\frac{F-k}{G} , G \right) = \left( (k^2-1)(2n+1), 0 \right), \\  
p_3 : (f,g) &= (0,k), &&\quad p_7 : (u_3,v_3) = \left(\frac{f}{g-k} , g-k \right) = \left( 0, 0 \right), \\  
p_2 : (F,G) &= (0, k), &&\quad p_8 : (u_4,v_4) = \left(\frac{F}{G-k} , G-k \right) = \left( 0, 0 \right).
\end{align}
\end{subequations} 
\end{proposition}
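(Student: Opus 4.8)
The plan is to regularise the birational map $\Phi_n$ by resolving its points of indeterminacy, following the standard construction of a space of initial values. Viewing \eqref{hoffmans} as a birational map $\phi_n : \p^1 \times \p^1 \dashrightarrow \p^1 \times \p^1$ (the source carrying parameter $n$ and the target $n+1$), I would use the fact that such a map lifts to an isomorphism between the blow-up of the source at the base points of $\phi_n$ and the blow-up of the target at the base points of $\phi_n^{-1}$, provided these base sets are finite and the resolved morphisms contract precisely the complementary exceptional configurations. Concretely, the task reduces to: (i) showing that the indeterminacy locus of $\phi_n$ is exactly $\{p_1, \dots, p_8\}$ with the $n$-dependence of \eqref{pointlocations}; (ii) showing that the indeterminacy locus of $\phi_n^{-1}$ is the same configuration of points but with $n \mapsto n+1$; and (iii) checking that after the eight blow-ups on each side the lifted map and its inverse are everywhere regular. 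The construction is uniform in $n$, so establishing it for a single $n$ yields the whole family.

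For step (i), I would work through the four affine charts of $\p^1 \times \p^1$ given by $(f,g)$, $(F,g)=(1/f,g)$, $(f,G)=(f,1/g)$ and $(F,G)=(1/f,1/g)$, and locate the points where one of the two $\p^1$-valued components of $\phi_n$ becomes indeterminate, i.e. where the numerator and denominator of that component vanish simultaneously. Substituting the explicit expression for $\bar{g}$ into the formula for $\bar{f}$ and clearing denominators, these are the four visible base points $p_1 : (f,g)=(k,0)$, $p_2 : (F,G)=(k,0)$, $p_3 : (f,g)=(0,k)$ and $p_4 : (F,G)=(0,k)$. Each of these is the source of a cascade: after blowing up, say, $p_1$ in the chart $(u_1,v_1) = ((f-k)/g,\, g)$ indicated in the statement, I would re-express $\phi_n$ in these coordinates and find that a single residual indeterminacy survives on the exceptional divisor $\{v_1 = 0\}$, located at $u_1 = (k^2-1)(2n+1)$, which is the infinitely near point $p_5$; the analogous computations at $p_2, p_3, p_4$ produce $p_6, p_7, p_8$. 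The essential bookkeeping is to confirm that the cascade terminates after exactly one further blow-up at each visible point, so that precisely eight blow-ups resolve the map, and that the charts $(u_i,v_i)$ of \eqref{pointlocations} are the correct blow-up coordinates.

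Step (ii) is carried out by the same procedure applied to the inverse, which I would obtain by solving \eqref{hoffmans} for $(f,g)$ in terms of $(\bar{f},\bar{g})$; the expectation is that the only change in the resulting base points is the replacement of $2n+1$ by $2(n+1)+1 = 2n+3$ in the infinitely near points $p_5, p_6$, matching the target surface $X_{(k,n+1)}$ (the remaining points are fixed, as $p_7, p_8$ already sit at the origin of their charts). With the base data of $\phi_n$ and $\phi_n^{-1}$ identified, step (iii) amounts to verifying in local coordinates near each exceptional divisor that the lifted map $X_{(k,n)} \to X_{(k,n+1)}$ and its inverse are given by regular functions with non-vanishing Jacobian, which shows that the curves contracted by the resolved morphism to the target $\p^1 \times \p^1$ are exactly the exceptional curves over the base points of $\phi_n^{-1}$, and hence that $\Phi_n$ is biregular. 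I expect the main obstacle to be the cascade analysis of steps (i) and (ii): correctly tracking the map through the blow-up charts to pin down the infinitely near points, and above all checking that no further indeterminacy appears, so that the process genuinely closes after eight blow-ups and the two exceptional configurations match under the lifted map.
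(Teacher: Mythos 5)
Your proposal is correct and follows essentially the same route as the paper's proof: compactify to $\p^1 \times \p^1$, locate the visible base points in the affine charts, blow them up, detect the residual infinitely near base points on the exceptional curves, and verify in local coordinates that the lifted map is biregular, with the curves contracted on each side matching the exceptional configuration on the other. The only substantive difference is bookkeeping of which map detects which base points. The paper exploits the triangular form of \eqref{hoffmans} (first $\bar{g}$ from $(f,g)$, then $\bar{f}$ from $(f,\bar{g})$): it attributes only $p_1, p_2$ and the infinitely near $p_5, p_6$ to the forward map, and obtains $p_3, p_4$ with $p_7, p_8$ as indeterminacies of the backward map, the forward map instead contracting curves onto their counterparts. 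You instead treat the fully composed map $(f,g) \mapsto (\bar{f},\bar{g})$ and assign all four visible points $p_1,\dots,p_4$ (plus all four infinitely near points) to its indeterminacy locus; this is in fact accurate, since at $(f,g)=(0,k)$ one has $\bar{g} \to k$, so $\bar{f} = (k-\bar{g})^2/\bigl(f(k\bar{g}-1)^2\bigr)$ is of the form $0/0$ there and is genuinely indeterminate, and similarly at $(F,G)=(0,k)$. Both accountings lead to the same eight blowups on each surface and the same final verification. Your steps (ii)--(iii), including the expectation that the only change for $\phi_n^{-1}$ is the shift $2n+1 \mapsto 2n+3$ in the positions of $p_5, p_6$, agree with what the paper computes explicitly (the forward map contracts the line $f=k$ onto the point $\bar{u}_1 = (k^2-1)(2n+3)$ of $\bar{E}_1$).
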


\begin{proof}

We compactify the domain and target spaces of the forward iteration map from $\C^2$ to $\p^1 \times \p^1$. We cover this with charts $(f,g), (F,g), (f,G)$ and $(F,G)$ for the domain space, where $F = 1/f, G=1/g$, and the same charts with the overline notation for the target space. We will use blowups to resolve the points of indeterminacy in $\p^1 \times \p^1$ of the forward and backward mappings, which we call \emph{basepoints} by analogy with the case of linear systems of divisors. Lifting the maps defined by iteration of the system to the blown-up surfaces, we will obtain a family of birational isomorphisms. \\
Recall that in the $(f,g)$ chart, we have the forward iteration of the system given by the mapping $(f,g) \mapsto (\bar{f},\bar{g})$, where
\begin{equation} \label{fwd}
\bar{f} = \frac{(k-\bar{g})^2}{f (k \bar{g}-1)^2}, \quad \quad \bar{g} = \frac{(k-f) \left( k fg - (2n+1) f - g +k(2n+1) \right)}{(kf-1) \left( k(2n+1) f g- f - (2n+1) g + k \right)}, 
\end{equation}
whereas the backward iteration is given by the inverse of this, which we write as 
\begin{equation}\label{back}
f = \frac{(k- \bar{g})^2}{\bar{f} (k \bar{g}-1)^2}, \quad \quad g = \frac{(f - k)\left( k f \bar{g} + (2n+1) f - \bar{g} - k(2n+1) \right)}{(k f - 1) \left( k(2n+1) f \bar{g} + f - (2n+1) \bar{g} - k \right)}.
\end{equation}
Direct calculation shows that the indeterminacies of the forward mapping are
\begin{equation}
p_1 : (f,g) = (k,0), \quad \quad p_2 : (F,G) = (k, 0),
\end{equation}
while the indeterminacies of the backward mapping are given in coordinates by
\begin{equation}
\bar{p}_3 : (\bar{f},\bar{g}) = (0,k), \quad \quad \bar{p}_4 : (\bar{F}, \bar{G}) = (0, k).
\end{equation}
We blow up these points in both the domain and target copies of $\p^1\times \p^1$. For each $j=1, \dots, 4$, the exceptional divisor $E_j$ replacing $p_j$ in the domain space is covered by the pair of local affine coordinate charts $(u_j, v_j)$ and $(U_j, V_j)$, with the part of the line visible in the first chart parametrised by $u_j$ when $v_j = 0$, and similarly in the second chart by $U_j$ when $V_j = 0$. The basepoints $p_j : (f,g) = (f_j, g_j)$ for $j=1,3$ are visible in the $(f,g)$ chart, and the blowup coordinates are defined as
\begin{equation}
u_j = \frac{ f - f_j}{g- g_j}, \quad v_j = g- g_j, \quad \quad U_j = \frac{g-g_j}{f - f_j}, \quad V_j = f- f_j,
\end{equation}
and similarly
\begin{equation}
u_j = \frac{ F - F_j}{G- G_j}, \quad v_j = G- G_j, \quad \quad U_j = \frac{G-G_j}{F - F_j}, \quad V_j = F- F_j,
\end{equation}
for the points $p_j : (F,G) = (F_j, G_j)$, for $j=2,4$. \\
In order to examine the image under the forward mapping of the exceptional line $E_1$ that replaces $p_1$ after the blowup, we rewrite \eqref{fwd} using the chart $(u_1, v_1)$ for the domain, and $(\bar{f},\bar{g})$ for the target. Direct calculation reveals another basepoint on the exceptional line $E_1$, with the forward mapping becoming indeterminate at
\begin{equation}
p_5 : (u_1,v_1) = \left(\frac{f-k}{g} , g \right) = \left(  (k^2-1)(2n+1), 0 \right).
\end{equation} 
The corresponding point $\bar{p}_5$ in the target space is also an indeterminacy of the backward iteration, which is evidenced by the following. Writing the mapping in charts $(f,g)$ and $(\bar{u}_1, \bar{v}_1)$, we see that $f=k, g \neq 0$ implies 
\begin{equation}
\bar{u}_1 = (k^2-1)(2n+3), \quad \quad \bar{v}_1 = 0,
\end{equation}
so the forward mapping sends the line $f=k$ (excluding $p_1$) to the single point $\bar{p}_5$ on $\bar{E}_1$. We blow up this point in both the domain and target spaces too, covering the exceptional line $E_5$ with the affine charts $(u_5,v_5)$ and $(U_5,V_5)$ defined by 
\begin{subequations}
\begin{align}
&u_5 = \frac{ u_1 - (k^2-1)(2n+1)}{v_1},  &&v_5 = v_1, \\
&U_5 = \frac{v_1}{ u_1 - (k^2-1)(2n+1)}, &&V_5 = { u_1 - (k^2-1)(2n+1)}.
\end{align}
\end{subequations}
Writing \eqref{fwd} in charts $(f,g)$ and $(\bar{u}_5, \bar{v}_5)$, we see that $f=k$ with $g \neq 0$ implies that 
\begin{equation} \label{E5}
\bar{u}_5 = \frac{ (k^2-1) \left( (k^2(4n+5)-4(n+1)^2) g - 4 n (2n^2 + 3n+1) k \right)}{k g}, \quad \bar{v}_5 = 0,
\end{equation}
so we have a one-to-one correspondence under the forward mapping between the proper transform of the line $f=k$ in the domain surface and the exceptional line $\bar{E}_5$ in the target surface. \\
Next, we compute the image of the exceptional line $E_1$ under the forward mapping, using the charts $(u_1, v_1)$ and $(\bar{u}_1, \bar{v}_1)$ we see that when $v_1=0$ and $u_1 \neq (k^2-1)(2n+1)$, we have
\begin{equation} \label{E1}
\bar{u}_1 = \frac{ (k^2-1)\left( (k^2-1)(2n+3) - (4n+3) u_1\right)}{k^2-1 - (2n+1) u_1}, \quad \bar{v}_1 = 0.
\end{equation}
With similar results in the charts $(U_1, V_1)$, we see that the forward mapping injects $E_1$ (excluding $p_5$) into $\bar{E}_1$. We now examine the image of the exceptional line $E_5$ under the forward mapping, writing \eqref{fwd} in charts $(u_5, v_5)$ and $(\bar{f}, \bar{g})$ then setting $v_5 = 0$, we see that the exceptional line $E_5$ is mapped bijectively to the proper transform of the curve defined in the $(\bar{f},\bar{g})$ chart by
\begin{equation} \label{E5imagecurve}
k \bar{f} ( k \bar{g}-1)^2 - (k - \bar{g})^2 = 0.
\end{equation}
We also deduce from \eqref{E1} that the proper transforms $E_1-E_5$ and $\bar{E}_1 - \bar{E}_5$ of the exceptional lines arising from $p_1, \bar{p}_1$ are in bijective correspondence under the mapping. Further, \eqref{E5} shows that the exceptional line $\bar{E}_5$ is in bijective correspondence with the proper transform $H_f - E_1$ of the line $f=k$, so we have lifted the mapping to a birational isomorphism in the neighbourhood of $p_1$. Calculations in the neighbourhoods of the exceptional lines $E_2, E_3$ and $E_4$ reveal similar phenomena, such as $H_f- E_2$ being blown down by the forward mapping to the single point $p_6 \in E_2$, while under the backward mapping $H_{\bar{g}} - \bar{E_3}$ and $H_{\bar{g}} - \bar{E_4}$ are blown down to $p_7 \in E_3$ and $p_8 \in E_4$ respectively. After blowing up these points $p_6, p_7$ and $p_8$, calculations in local coordinates similar to those above show that the system lifts to a family of birational isomorphism as claimed.


\end{proof}
We next show that the surfaces $X_{(k,n)}$ are a family of generalised Halphen surfaces of index zero, and place them in Sakai's classification.

\begin{proposition}
Each surface $X_{(k,n)}$ has a unique representative of its anticanonical class, given by 
\begin{equation}
D = D_0 + D_1 + 2 D_2 + D_3 + D_4,
\end{equation}
where the divisor classes $\delta_j = [D_j]$ of the irreducible components are 
\begin{subequations} \label{hoffmanssurfaceroots}
\begin{align}
\delta_0 &= \mathcal{E}_1-\mathcal{E}_5, \quad \delta_1 = \mathcal{E}_2-\mathcal{E}_6, \quad \delta_2 = \mathcal{H}_f + \mathcal{H}_g - \mathcal{E}_1- \mathcal{E}_2- \mathcal{E}_3- \mathcal{E}_4, \\ 
\delta_3 &= \mathcal{E}_3-\mathcal{E}_7, \quad \delta_4 = \mathcal{E}_4-\mathcal{E}_8.
\end{align}
\end{subequations}
The divisor $D$ is of canonical type, and $X_{(k,n)}$ form a family of $\mathcal{R}$-surfaces with $\mathcal{R} = D_4^{(1)}$. The surface root lattice is given by $Q(R) = \text{span}_{\Z}\left\{ \delta_0 , \dots , \delta_4 \right\}$, while the symmetry root lattice is $Q(R^{\perp}) = \text{span}_{\Z}\left\{\alpha_0, \dots \alpha_4 \right\}$, with the simple roots given by
\begin{subequations} \label{hoffmanssymmetryroots}
\begin{align}
\alpha_0 &= \H_g -\mathcal{E}_1 -\mathcal{E}_5,  \quad \alpha_1 = \H_g -\mathcal{E}_2 -\mathcal{E}_6, \quad \alpha_2 = \H_f - \mathcal{H}_g, \\
\alpha_3 &= \H_g -\mathcal{E}_3 -\mathcal{E}_7, \quad \alpha_4 = \H_g -\mathcal{E}_4 -\mathcal{E}_8.
\end{align}
\end{subequations}
\end{proposition}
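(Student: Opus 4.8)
The plan is to realise the classes $\delta_0,\dots,\delta_4$ by explicit irreducible curves on $X_{(k,n)}$ and then reduce every remaining assertion to intersection bookkeeping in $\Pic(X)$. The most efficient way to locate the anticanonical divisor is through the rational $2$-form invariant under the dynamics. On $\p^1\times\p^1$ I would take $\omega=\dfrac{df\wedge dg}{(kfg-f-g+k)^2}$, for which $\div(\omega)=-2C$ with $C$ the $(1,1)$-curve $\{kfg-f-g+k=0\}$; this $C$ is exactly the curve through the four ``horizontal'' basepoints. I then track $\div(\omega)$ through the eight blowups of the previous proposition, using that blowing up a point adds $+\mathcal{E}_j$ to the canonical divisor, so that the pole divisor of the lifted form is the candidate for $D$.

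The genuinely geometric input is the identification of the five components. First, $C$ is irreducible, since a $(1,1)$-form factors only when $k^2=1$, which is excluded by $k\notin\{0,\pm1\}$. Second, a short computation in the charts of the previous proposition shows that $C$ passes simply through $p_1,p_2,p_3,p_4$ and meets each first-round exceptional line $E_i$ at a point distinct from the infinitely near basepoint $p_{i+4}$, so that its proper transform $\tilde C=D_2$ has class $\delta_2=\mathcal{H}_f+\mathcal{H}_g-\mathcal{E}_1-\mathcal{E}_2-\mathcal{E}_3-\mathcal{E}_4$. The remaining components $D_0,D_1,D_3,D_4$ are the proper transforms $\tilde E_1,\tilde E_2,\tilde E_3,\tilde E_4$, which after blowing up $p_5,\dots,p_8$ carry classes $\mathcal{E}_i-\mathcal{E}_{i+4}$. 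Carrying $\omega$ through all eight blowups yields $\div(\omega)=-2\tilde C-\sum_{i=1}^{4}\tilde E_i=\mathcal{K}_X$, i.e. the pole divisor is $D=D_0+D_1+2D_2+D_3+D_4$, whose class is $-\mathcal{K}_X$ as claimed.

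With the components in hand the structural statements are linear algebra. Pairing $-\mathcal{K}_X$ with each $\delta_i$ gives $0$, so $D$ is of canonical type; and computing the Cartan matrix $a_{ij}=-\delta_i\cdot\delta_j$ shows $\delta_i^2=-2$ for all $i$, that $\delta_2$ meets each of $\delta_0,\delta_1,\delta_3,\delta_4$ exactly once, and that these four are mutually orthogonal. This is precisely the affine $D_4^{(1)}$ diagram with $\delta_2$ the branch node, fixing $\mathcal{R}=D_4^{(1)}$ and giving $Q(R)=\text{span}_{\Z}\{\delta_0,\dots,\delta_4\}$. Uniqueness of the anticanonical representative I would get from Riemann--Roch: on a rational surface $\chi(\mathcal{O}_X)=1$, and eight blowups of $\p^1\times\p^1$ give $\mathcal{K}_X^2=0$, so $\chi(-\mathcal{K}_X)=1$, while $h^2(-\mathcal{K}_X)=h^0(2\mathcal{K}_X)=0$; since the support of $D$ is the connected affine-root configuration just found, with negative semi-definite intersection form, no pencil of anticanonical divisors can exist and $h^0(-\mathcal{K}_X)=1$, so $D$ is the unique effective member.

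Finally, for the symmetry lattice I would take the stated $\alpha_0,\dots,\alpha_4$ and verify by direct pairing that each is orthogonal to $\mathcal{K}_X$ (hence lies in $Q(E_8^{(1)})$) and to every $\delta_j$ (hence in $Q(R)^\perp$), and that their Cartan matrix $-\alpha_i\cdot\alpha_j$ is again that of $D_4^{(1)}$ with $\alpha_2$ the branch node. The $\alpha_i$ therefore generate a rank-$5$ $D_4^{(1)}$ root lattice inside $Q(R)^\perp$; since $Q(R)$ and $Q(R)^\perp$ both have rank $5$ and share the null root $\delta=-\mathcal{K}_X$, while Sakai's proposition guarantees $Q(R)^\perp$ is itself an affine root lattice of the same rank, the two coincide and the $\alpha_i$ form a simple-root basis for $Q(R^\perp)=Q(R)^\perp$. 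I expect the one properly geometric step, namely the incidence analysis of $C$ with the eight (partly infinitely near) basepoints, to be the main obstacle; everything else is bookkeeping with the intersection form on $\Pic(X)$.
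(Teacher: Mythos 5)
Your proposal is correct in outline, reaches every assertion of the proposition, and in several places takes a genuinely different route from the paper. The paper exhibits the five curves, sums their classes in $\Pic(X)$ to verify $[D]=-\mathcal{K}_X$, and reads off the Cartan matrix; you instead produce $D$, including its multiplicity-$2$ component, automatically as the pole divisor of $\omega = df\wedge dg/(kfg-f-g+k)^2$ carried through the eight blowups. That is a real gain: it explains the coefficient $2$ on $D_2$ rather than postulating it, and it is the same $\omega$ that feeds the period map later in the paper. Your determinant criterion for irreducibility of the $(1,1)$-curve (factorisation only when $k^2=1$) is a detail the paper omits, and your equation for the curve is in fact the correct one: the paper's \eqref{11curve} has a sign typo ($-k$ should be $+k$; as printed the curve misses $p_1$), and the $+k$ version is what the paper itself uses in \eqref{fgtoxy} and \eqref{fullgenericequation}. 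For the symmetry lattice the mechanics also differ (the paper solves linear equations to find the $\alpha_i$; you verify the stated ones and argue, via the shared null root and a rank/discriminant comparison, that they generate all of $Q(R)^{\perp}$ and not merely a finite-index sublattice --- a subtlety the paper glosses over). One caveat affecting both your write-up and the paper's: the incidence claim that the proper transform of $C$ avoids the infinitely near points fails at $n=0$, since $\tilde{C}$ meets $E_1$ at $u_1=k^2-1$, which equals $(k^2-1)(2n+1)$ exactly when $n=0$; the proposition implicitly requires $n\neq 0$.

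The one step that does not hold as stated is your uniqueness argument. Riemann--Roch correctly gives $\chi(-\mathcal{K}_X)=1$ and $h^2(-\mathcal{K}_X)=0$, so $h^0(-\mathcal{K}_X)=1+h^1(-\mathcal{K}_X)$ and uniqueness is equivalent to $h^1(-\mathcal{K}_X)=0$. But the principle you invoke --- that a connected, negative semi-definite, affine-type configuration precludes a pencil --- is false: such configurations are precisely the Kodaira degenerate fibres, and $D=D_0+D_1+2D_2+D_3+D_4$ is exactly a fibre of type $I_0^*$ ($D_4^{(1)}$), which occurs on rational elliptic surfaces (Halphen surfaces of index one), where $\dim|-\mathcal{K}_X|=1$. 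So the shape of $D$ can never rule out a pencil; what must be ruled out is that the eight points are base points of a pencil of $(2,2)$-curves. Concretely, each pair $(p_i,p_{i+4})$ imposes two linear conditions on the nine-dimensional space of $(2,2)$-forms (vanishing at $p_i$, and vanishing of the induced linear form on $E_i$ at $p_{i+4}$), and one must check that these eight conditions have rank eight, i.e.\ that the only solution is $(kfg-f-g+k)^2$; this depends on the actual positions, not just the intersection configuration. In fairness, the paper's own argument --- rigidity of each component implies rigidity of $D$ --- has the same gap, since a sum of rigid $(-2)$-curves can move (an $I_2$ fibre is the standard example); so the fix for your proof, replacing the ``no pencil'' sentence by this rank computation, is also what a fully rigorous version of the paper's proof would require. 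Everything else in your proposal stands.
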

\begin{proof}
We aim to identify irreducible divisors of $X_{(k,n)}$ which give the decomposition of a unique representative $D \in | - \mathcal{K}_{X_{(k,n)}} | = 2 \H_f + 2\H_g - \E_1 - \dots -\E_8 \in \Pic(X_{(k,n)})$. We first note that the proper transforms of the exceptional lines $E_1, \dots, E_4$ under the blowups of $p_5, \dots , p_8$ are the divisors $E_1 - E_5,~ E_2-E_6,~ E_3- E_7$ and $E_4- E_8$ respectively. These have self-intersection $-2$ and are irreducible. Further, the proper transform of the curve defined in the chart $(f,g)$ by
\begin{equation} \label{11curve}
k f g - f - g - k = 0,
\end{equation}
corresponds to the divisor $H_f + H_g - E_1 -E_2 - E_3 -E_4$. This can be verified by noting that the polynomial defining the curve is of bi-degree $(1,1)$ in $(f,g)$, and the curve passes through the points $p_1, \dots , p_4$ with multiplicity 1, but its proper transform under their blowups does not intersect $p_5, \dots, p_8$, which can be checked by direct calculation. We deduce that this curve defines a representative of the divisor class $\H_f + \H_g - \E_1 -\E_2 - \E_3 - \E_4$, and compute its self-intersection to be $-2$. If we denote these five divisors by 
\begin{subequations} \label{components}
\begin{align}
D_0 &= E_1-E_5, \quad D_1 = E_2-E_6, \quad D_2 = H_f + H_g - E_1 -E_2 - E_3 -E_4, \\ 
D_3 &= E_3-E_7, \quad D_4 = E_4-E_8,
\end{align}
\end{subequations}
we see that $D = D_0 + D_1 + 2 D_2 + D_3 + D_4$ is a representative of the anticanonical class of $X_{(k,n)}$, as 
\begin{equation}
[D] = [D_0] + [D_1] + 2 [D_2] + [D_3] + [D_4] = 2\mathcal{H}_f + 2\mathcal{H}_g - \mathcal{E}_1 - \dots -\E_8 = - \mathcal{K}_{X_{(k,n)}}.
\end{equation}
With regards to the dimension of the anticanonical class, each of the divisors $D_0, D_1, D_3$ and $D_4$ defined as proper transforms of exceptional lines give unique representatives of their classes in $\Pic(X)$. Further, it can be shown by direct calculation that \eqref{11curve} defines the unique curve of bi-degree $(1,1)$ passing through $p_1, \dots , p_4$, which means the divisor $D_2$ does not move in a family and its class in $\Pic(X_{(k,n)})$ is of dimension zero. Thus we deduce that $D$ is the unique representative of the anticanonical class. Direct computation shows that $[D_j] \cdot \mathcal{K}_X = 0$ for $j=0, \dots ,4$, so we see also that $D$ is of canonical type. \\
Computing the pairwise intersections of the classes $\delta_j = [D_j]$, for $j=0,\dots, 4$, we obtain 
\begin{equation}
- ( \delta_i \cdot \delta_j ) = \left(\begin{array}{ccccc}2 & 0 & -1 & 0 & 0 \\0 & 2 & -1 & 0 & 0 \\-1 & -1 & 2 & -1 & -1 \\0 & 0 & -1 & 2 & 0 \\0 & 0 & -1 & 0 & 2\end{array}\right).
\end{equation}
The matrix on the right-hand side is the generalised Cartan matrix of type $D_4^{(1)}$, meaning that each $X_{(k,n)}$ is a $D_4^{(1)}$-surface in the sense of Sakai's theory. \\
The root lattice given by the orthogonal complement $Q(R)^{\perp} \subset \delta^{\perp} \subset \Pic(X)$ is also of type $R^{\perp} = D_4^{(1)}$. We find a basis of simple roots for $Q(R^{\perp})$ by looking for a set of five linearly independent elements of $\delta^{\perp}$ with the required intersection numbers. This yields a system of linear equations for the coefficients of the generators in each of the simple roots, from which we obtain the basis \eqref{hoffmanssymmetryroots} for the symmetry root lattice, and the proof is complete.
\end{proof}
We now identify the Cremona isometry induced by the family of birational isomorphisms, which determines whether the system is a discrete Painlev\'e equation according to definition 2.6.

\begin{theorem}
The family of birational maps $\Phi_n$ lifted from the system \eqref{hoffmans} induce via their pullbacks the map $\varphi : \text{Pic}(X) \rightarrow \text{Pic}(X)$ given by 
\begin{equation} \label{isometry}
\varphi : \left\{ \begin{array}{lcl}
\H_f & \mapsto & 5\H_f + 2 \H_g -2\E_1-2\E_2- \E_3 - \E_4 -2\E_5-2\E_6- \E_7 - \E_8, \\
\H_g & \mapsto & 2 \H_f + \H_g - \E_1 - \E_2  - \E_5 - \E_6, \\
\E_1 & \mapsto & \H_f - \E_5, \\
\E_2 & \mapsto & \H_f  - \E_6, \\
\E_3 & \mapsto &  2\H_f + \H_g - \E_1- \E_2 - \E_5 - \E_6 - \E_7, \\
\E_4 & \mapsto &  2\H_f + \H_g - \E_1- \E_2 - \E_5 - \E_6 - \E_8, \\
\E_5 & \mapsto &  \H_f - \E_1, \\
\E_6 & \mapsto &  \H_f - \E_2, \\
\E_7 & \mapsto &  2\H_f + \H_g - \E_1- \E_2 - \E_3 - \E_5 - \E_6, \\
\E_8 & \mapsto &  2\H_f + \H_g - \E_1- \E_2 - \E_4 - \E_5 - \E_6.
\end{array}
\right.
\end{equation}
In particular, $\varphi$ acts on the symmetry roots according to 
\begin{equation} \label{phionalpha}
\varphi : \left( \alpha_0, \alpha_1, \alpha_2, \alpha_3, \alpha_4 \right) \mapsto \left( \alpha_1, \alpha_0, \alpha_2, \alpha_4, \alpha_3 \right) + \left( 0 ,0 , \delta, -\delta, -\delta \right).
\end{equation}
Further, $\varphi$ is a Cremona isometry for the family $X_{(n,k)}$ of $D_4^{(1)}$-surfaces, and can be written in terms of generators of $\widetilde{W}(D_4^{(1)})$ as
\begin{equation} \label{generatorsexpression}
\varphi = r_{201234} = \pi_{(01)(34)} T_{v},
\end{equation}
where $T_v$ is Kac translation by the weight vector $v = \frac{1}{2} (\alpha_3+\alpha_4) \in P(R_{\circ}^{\perp})$, and $\pi_{(01)(34)}$ is the Cremona isometry given by the Dynkin diagram automorphism swapping $\alpha_0, \alpha_1$, and $\alpha_3, \alpha_4$.

\end{theorem}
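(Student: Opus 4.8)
The plan is to compute the induced map $\varphi = \Phi_n^*$ on $\Pic(X)$ directly from the birational geometry set up in the previous proof, and then to extract the remaining assertions by linear algebra on the resulting formula. Since each $\Phi_n$ is realised as an isomorphism of surfaces once all eight points have been blown up, its pullback automatically preserves the intersection form, fixes the canonical class, and sends effective classes to effective classes. Hence, as soon as the explicit formula \eqref{isometry} is established, the fact that $\varphi$ is a Cremona isometry for the family is immediate; verifying the first two conditions in the definition of a Cremona isometry by hand then serves only as a consistency check on the computation.

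The computational heart is determining \eqref{isometry}. First I would compute $\varphi(\mathcal{H}_g)$ as the class of the total transform of the preimage under $\Phi_n$ of a generic line $\bar g = \mathrm{const}$: substituting the second component of \eqref{fwd} shows this preimage is a curve of bi-degree $(2,1)$ in $(f,g)$ passing through $p_1,p_2$ and the infinitely near points $p_5,p_6$, but through none of $p_3,p_4,p_7,p_8$, giving $\varphi(\mathcal{H}_g) = 2\mathcal{H}_f + \mathcal{H}_g - \mathcal{E}_1 - \mathcal{E}_2 - \mathcal{E}_5 - \mathcal{E}_6$. The exceptional images $\varphi(\mathcal{E}_1),\varphi(\mathcal{E}_2),\varphi(\mathcal{E}_5),\varphi(\mathcal{E}_6)$ are read off from the curve correspondences already recorded near $p_1$ and $p_2$: for instance the proper transform of the line $f=k$, of class $\mathcal{H}_f - \mathcal{E}_1$, maps isomorphically onto $\bar{E}_5$, so $\varphi(\mathcal{E}_5) = \mathcal{H}_f - \mathcal{E}_1$, and the correspondence $E_1 - E_5 \leftrightarrow \bar{E}_1 - \bar{E}_5$ then yields $\varphi(\mathcal{E}_1) = \mathcal{H}_f - \mathcal{E}_5$; the images of $\mathcal{E}_3,\mathcal{E}_4,\mathcal{E}_7,\mathcal{E}_8$ come from the analogous analysis in the $g$-direction near $p_3,p_4$. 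With all eight exceptional images and $\varphi(\mathcal{H}_g)$ in hand, I would recover $\varphi(\mathcal{H}_f)$ with no further geometry by imposing $\varphi(-\mathcal{K}_X) = -\mathcal{K}_X$, that is $2\varphi(\mathcal{H}_f) = -\mathcal{K}_X - 2\varphi(\mathcal{H}_g) + \sum_i \varphi(\mathcal{E}_i)$. The main obstacle throughout is the careful bookkeeping of multiplicities at the infinitely near basepoints $p_5,\dots,p_8$, since miscounting these is precisely what would corrupt the pullback classes.

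Granting \eqref{isometry}, the action \eqref{phionalpha} on the symmetry roots follows by substituting into \eqref{hoffmanssymmetryroots} and simplifying using $\delta = -\mathcal{K}_X$; this displays the permutation part as the double transposition $(\alpha_0\,\alpha_1)(\alpha_3\,\alpha_4)$ together with shifts $(0,0,+1,-1,-1)$ along $\delta$. To reach \eqref{generatorsexpression} I would identify this permutation with the Dynkin diagram automorphism $\pi_{(01)(34)}$, which acts geometrically by interchanging the paired blowups $\mathcal{E}_1\leftrightarrow\mathcal{E}_2$, $\mathcal{E}_3\leftrightarrow\mathcal{E}_4$, $\mathcal{E}_5\leftrightarrow\mathcal{E}_6$, $\mathcal{E}_7\leftrightarrow\mathcal{E}_8$ and fixing $\mathcal{H}_f,\mathcal{H}_g$. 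Setting $T = \pi_{(01)(34)}^{-1}\varphi$, the root action shows $T$ fixes $\alpha_0,\alpha_1$ and sends $\alpha_2 \mapsto \alpha_2 + \delta$, $\alpha_3 \mapsto \alpha_3 - \delta$, $\alpha_4 \mapsto \alpha_4 - \delta$. Because every root is orthogonal to $\delta$, the Kac translation formula \eqref{kactrans} reduces on roots to $T_v(\alpha_j) = \alpha_j + (\alpha_j \cdot v)\delta$, so matching these shifts forces $\alpha_j \cdot v = (0,0,1,-1,-1)$; solving these conditions in the weight lattice $P(R_\circ^\perp)$ gives $v = \tfrac12(\alpha_3 + \alpha_4)$, identifying $T = T_v$ and hence $\varphi = \pi_{(01)(34)}\,T_v$.

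Finally, to confirm the reduced-word expression $\varphi = r_{201234} = r_2 r_0 r_1 r_2 r_3 r_4$, I would compute the action of this product of simple reflections on the generators of $\Pic(X)$ using $r_{\alpha_i}(\lambda) = \lambda + (\alpha_i \cdot \lambda)\alpha_i$ (valid since $\alpha_i \cdot \alpha_i = -2$), and check that it reproduces \eqref{isometry} exactly. I expect this to be entirely mechanical. The only conceptual point worth flagging is that, although a product of reflections lies in $W(D_4^{(1)})$ whereas $\pi_{(01)(34)}\,T_v$ visibly involves a nontrivial diagram automorphism and a half-integral translation vector, the two maps nevertheless coincide on the rank-$10$ lattice $\Pic(X)$; this is consistent with the structure of the extended affine Weyl group realised on $\Pic(X)$, in which $\pi_{(01)(34)}$ and $T_v$ lie in the same coset of $Q(R_\circ^\perp)$ in $P(R_\circ^\perp)$, so that their composite re-enters the affine Weyl group.
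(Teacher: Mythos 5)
Your proposal is correct and follows essentially the same route as the paper: both determine $\varphi$ from the curve correspondences established in the construction of the space of initial values (the blown-down lines $f=k$, $F=k$, $\bar g = k$, $\bar G = k$ and the proper transforms of the exceptional lines), and both identify the translation vector $v$ by matching the shifts in \eqref{phionalpha} against the Kac formula \eqref{kactrans} restricted to classes orthogonal to $\delta$, yielding $v = \tfrac{1}{2}(\alpha_3+\alpha_4)$. The only differences are organizational: you track pullbacks directly on a well-chosen set of classes and recover $\varphi(\H_f)$ from invariance of the anticanonical class, whereas the paper records pushforwards (including the class of the image of $E_5$, the $(1,2)$-curve \eqref{E5imagecurve}, whose multiplicities must be checked against every exceptional line) and then inverts the resulting linear map -- a slightly heavier but equivalent computation.
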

\begin{proof}
We note that there are many equivalent calculations by which the map \eqref{isometry} may be deduced from the birational isomorphisms $\Phi_n$. The basic idea is to choose effective classes in $\Pic(X_{(k,n)})$ such that computing the images of their representatives under the pushforward 
\begin{equation}
(\Phi_{n})_* : \text{Div}(X_{(k,n)}) \rightarrow  \text{Div}(X_{(k,n+1)}),
\end{equation}
is as simple as possible, and gives enough conditions to deduce $\varphi$. The calculations we outline are based on those from the proof of Proposition 3.1. \\
First recall that by considering the forward mapping in charts $(u_1, v_1)$ and $(\bar{u}_1, \bar{v}_1)$, we obtained \eqref{E1}, from which we deduce that $(\Phi_n)_* (E_1 - E_5) = \bar{E}_1 - \bar{E}_5$, which means that with the identification of Picard groups of the family $X_{(k,n)}$ we have
\begin{equation} \label{E1E5}
\varphi^{-1}(\E_1 - \E_5) = \E_1 - \E_5.
\end{equation}
Similarly, our calculation of the image under $\Phi_n$ of the line $f=k$ led to \eqref{E5}, which shows that $(\Phi_n)_* (H_f-E_1) = \bar{E}_5$ and therefore
\begin{equation} \label{HfE1}
\varphi^{-1}(\H_f - \E_1) = \E_5.
\end{equation}
The map $\varphi^{-1}$ is linear by construction, so if we obtain $\varphi^{-1}(\E_5)$, we may deduce $\varphi^{-1}(\H_f)$ and $\varphi^{-1}(\E_1)$ from \eqref{E1E5} and \eqref{HfE1}. In order to do this, we consider the mapping in charts $(u_5, v_5)$ and $(\bar{f}, \bar{g})$ and obtain the image of $E_5$ to be the proper transform of the curve defined by \eqref{E5imagecurve}. To determine the element of $\Div(X_{(k,n+1)})$ defined by the proper transform, we check its intersection with the exceptional lines $\bar{E_1}, \dots \bar{E}_8$. For example, to check intersections with $\bar{E}_1$ and $\bar{E}_5$, we substitute $\bar{f} = \bar{u}_1 \bar{v}_1 + k$ and $\bar{g} = \bar{v}_1$ into the equation of the curve \eqref{E5imagecurve}, which gives the equation of the total transform in the chart $(\bar{u}_1, \bar{v}_1)$ as
\begin{equation}
\bar{v}_1 \left(  k ( k \bar{v}_1 - 1)^2\bar{u}_1 + (k^4-1)\bar{v}_1 - 2k(k^2-1) \right) = 0.
\end{equation}
The factor $\bar{v}_1$ appearing here with exponent one indicates that the proper transform of the curve intersects $\bar{E}_1$ with multiplicity one, so $(\Phi_n)_*(E_5) \cdot \bar{E}_1 = 1$. Further, the proper transform of the curve \eqref{E5imagecurve} is given by 
\begin{equation}
 k ( k \bar{v}_1 - 1)^2\bar{u}_1 + (k^4-1)\bar{v}_1 - 2k(k^2-1) = 0.
\end{equation}
The coordinates of the point $\bar{p}_5$ do not satisfy this equation, so we deduce that $(\Phi_n)_*(E_5) \cdot \bar{E}_5 = 0$, so the coefficient of $\E_5$ in $\varphi^{-1}(\E_5)$ is zero. Similar calculations in charts covering the rest of the exceptional lines, and the observation of the bi-degree of the polynomial defining the curve allow us to deduce the coefficients, and we arrive at
\begin{equation}
\varphi^{-1} (\E_5) = \H_f + 2\H_g - \E_1 - \E_3 - \E_4 - \E_7 - \E_8.
\end{equation}
By similar calculations we obtain more conditions on $\varphi$:
\begin{subequations}
\begin{align}
\varphi^{-1}(\E_2 - \E_6) &= E_2 - E_6, \quad \quad  \varphi^{-1}(\E_3 - \E_7) = E_3 - E_7, \quad \quad \varphi^{-1}(\E_4 - \E_8) = E_4 - E_8, \\
\varphi^{-1}(\H_g - \E_1) &= \H_f + 3\H_g -\E_1- \E_2 - \E_3 - \E_4 -\E_6 - \E_7 - \E_8, \\
\varphi^{-1}(\H_g - \E_2) &= \H_f + 3\H_g -\E_1- \E_2 - \E_3 - \E_4 -\E_5 - \E_7 - \E_8, \\
\varphi^{-1}(\E_6) &=  \H_f + 2\H_g - \E_2 - \E_3 - \E_4 - \E_7 - \E_8, \\
\varphi^{-1}(\E_7) &= \H_g -  \E_3, \quad \quad \varphi^{-1}(\E_8) = \H_g -  \E_4.
\end{align}
\end{subequations}
From these, we find the map $\varphi$ induced by the pullbacks to be given by \eqref{isometry}, from which we may check directly that $\varphi$ fixes each of $\delta_0 , \delta_1, \delta_2, \delta_3$ and $\delta_4$, as well as the canonical class in $\Pic(X)$, and that $\varphi$ preserves the intersection product. We also note that as $\varphi$ is induced by the pullbacks of birational isomorphisms it must preserve effectiveness of classes in $\Pic(X)$. Therefore $\varphi$ is a Cremona isometry of the family of $D_4^{(1)}$-surfaces, and by Sakai's results (theorem 2.4) we may express it in terms of generators of $\widetilde{W}(D_4^{(1)})$. \\
Recall that for each simple root $\alpha_j$ from our basis \eqref{hoffmanssymmetryroots} for $ Q(R^{\perp}) \subset \Pic(X)$, the corresponding simple reflection is defined by the formula
\begin{equation}
r_{j} (\lambda) = \lambda - 2\frac{\alpha_j \cdot \lambda}{\alpha_j \cdot \alpha_j} \alpha_j,
\end{equation} 
for $\lambda \in \Pic(X)$. A Dynkin diagram automorphism of $R^{\perp}$ will be denoted by $\pi_{\sigma}$, where $\sigma$ is the permutation of indices of nodes of the Dynkin diagram corresponding to the simple roots $\alpha_0, \dots ,\alpha_4$. The extended affine Weyl group $\widetilde{W}(D_4^{(1)})$ is then generated by these simple reflections together with a generating set for $\text{Aut}(R^{\perp})$, and by standard techniques \cite{KAC,KNY2017,DT2018}, we obtain the expression
\begin{equation}
\varphi = r_2 r_0 r_1 r_2 r_3 r_4 =  r_{201234},
\end{equation}
where for conciseness we have used the subscript notation to indicate the composition of a sequence of simple reflections. The action of the element $\varphi$ on the symmetry root lattice $Q(R^{\perp})$ is found to be given by \eqref{phionalpha} by direct calculation. In light of the Kac translation formula \eqref{kactrans}, we note that $\varphi$ is not itself a translation, otherwise it would act on the simple roots by adding multiples of the null root $\delta$. Consider the Dynkin diagram automorphism $\pi_{(01)(34)}$, which cycles each of the pairs $\alpha_0, \alpha_1$, and $\alpha_3, \alpha_4$ is realised by the following action on $\Pic(X)$:
\begin{equation}
\pi_{(01)(34)} : \quad  \E_1 \leftrightarrow \E_2, \quad \E_3 \leftrightarrow \E_4, \quad \E_5 \leftrightarrow \E_6, \quad \E_7 \leftrightarrow \E_8. 
\end{equation}
The composition of this with $\varphi$ acts on the symmetry root lattice according to
\begin{equation} \label{Tonalpha}
\pi_{(01)(34)} \varphi : \left( \alpha_0, \alpha_1, \alpha_2, \alpha_3, \alpha_4 \right) \mapsto \left( \alpha_0, \alpha_1, \alpha_2, \alpha_3, \alpha_4 \right) + \left( 0 ,0 , \delta, -\delta, -\delta \right),
\end{equation}
which is translational, as we now demonstrate. To determine the weight vector $v \in P(R_{\circ}^{\perp}) = P(D_4)$ giving this action via the Kac translation $T_v$, we form an ansatz $v = \sum_{j=1}^4 c_j \alpha_j$, and determine rational coefficients $c_j$ such that 
\begin{equation}
\left( v \cdot \alpha_0, v \cdot \alpha_1, v \cdot \alpha_2, v \cdot \alpha_3, v \cdot \alpha_4 \right) = (0,0,1, -1,-1),
\end{equation}
which ensures $T_v$ gives the action \eqref{Tonalpha} according to the Kac translation formula \eqref{kactrans}. By direct calculation in this case we obtain the unique solution $(c_1, c_2, c_3, c_4) = (0,0, 1/2, 1/2)$, so we have the translation vector
\begin{equation}
v = \frac{1}{2} \left(\alpha_3 + \alpha_4\right),
\end{equation}
and the proof is complete.
\end{proof}
\begin{remark}
The weight vector $v$ has squared length $||v||^2 =  - (v \cdot v) = 1$, which is minimal among all nonzero elements of $P(D_4)$, so we may think of $v$ as a `nearest-neighbour connecting vector' in the $D_4$ weight lattice, in the sense of \cite{RGO2001,JN2017}. Further, we have that 
\begin{equation}
\varphi^2 = T_v^2 = T_{2v},
\end{equation}
so in particular the element $\varphi = \pi_{(01)(34)} T_v$ is of infinite order in $\widetilde{W}(D_4^{(1)})$. Further, if we are to regard \eqref{hoffmans} as a projective reduction of a discrete Painlev\'e equation, it is the one given by the translation $T_{2v}$. This contrasts with the projective reduction introduced in the introduction, in the sense that the element $R \in \widetilde{W}( (A_2 + A_1)^{(1)})$ squared to give a translation by a nearest-neighbour connecting vector, which $T_{2v}$ is not.
\end{remark}

\section{Generic version of the equation}

We have found that the system \eqref{hoffmans} is regularised on a family of $D_4^{(1)}$-surfaces with less than the full number of parameters, and corresponds to a non-translation element of $\widetilde{W}(D_4^{(1)})$. In this section, we demonstrate first how the equation may be recovered as a kind of projective reduction using a birational representation of the symmetry group. Unlike previous studies of such systems, we proceed to show that although the action of the element of the symmetry group is not translational on the full parameter space, it still defines a difference equation with the full number of parameters, which we construct explicitly from the birational representation.

\subsection{Birational representation of $\widetilde{W}(D_4^{(1)})$ on generic $D_4^{(1)}$-surfaces}
The first step in our construction of a full-parameter version of the system \eqref{hoffmans} is to define a family of generic $D_4^{(1)}$-surfaces through a basepoint configuration generalising that which gives the family $X_{(k,n)}$. We introduce the extra parameters controlling basepoint locations first in a naive way, and give a birational representation of $\widetilde{W}(D_4^{(1)})$ on this family, with the system \eqref{hoffmans} recovered as a kind of projective reduction. We then obtain the root variable parametrisation in subsection 4.2, which allows our full-parameter version to be constructed explicitly. 
\begin{proposition}
Consider the surface $X_{\mathbf{b}} = X_{(k, b_1,b_2,b_3,b_4)}$ obtained from $\p^1 \times \p^1$ by blowing up points $p_1, \dots, p_8$ given in coordinates by
\begin{subequations}
\begin{align}
p_1 : (f,g) &= (k,0), &&\quad p_5 : (u_1,v_1) = \left(\frac{f-k}{g} , g \right) = \left( b_1, 0 \right), \\  
p_2 : (F,G) &= (k, 0), &&\quad p_6 : (u_2,v_2) = \left(\frac{F-k}{G} , G \right) = \left( b_2, 0 \right), \\  
p_3 : (f,g) &= (0,k), &&\quad p_7 : (u_3,v_3) = \left(\frac{f}{g-k} , g-k \right) = \left( b_3, 0 \right), \\  
p_4 : (F,G) &= (0, k), &&\quad p_8 : (u_4,v_4) = \left(\frac{F}{G-k} , G-k \right) = \left( b_4, 0 \right).
\end{align}
\end{subequations} 
Then $X_{\mathbf{b}}$ form a family of generic $D_4^{(1)}$-surfaces, with surface root basis in $\Pic(X)$ given by \eqref{hoffmanssurfaceroots} and symmetry root sublattice with basis \eqref{hoffmanssymmetryroots}. Further, we have a Cremona action of the symmetry group $\widetilde{W}(D_4^{(1)})$, which we write as an action on the parameters and coordinates defined as follows:
\begin{align*}
& r_0 :  \left\{ \begin{aligned}
f &\mapsto  \frac{(b_1+1)f g - k f - k g + k^2}{k f g - f +(b_1 - k^2) g + k}, \\
g &\mapsto  g,
\end{aligned}
\right. 
&& \begin{array}{cc}
r_0.b_1 = k^2-1 - b_1, & r_0.b_3 = \frac{1 - b_1 b_3}{k^2-1 -b_1}, \\
r_0.b_2 = \frac{(k^2-1)^2 - b_1 b_2}{k^2-1 -b_1}, & r_0.b_4 = \frac{1 - b_1 b_4}{k^2-1 -b_1},
\end{array}
\\
& r_1 : \left\{ \begin{aligned}
f &\mapsto  \frac{k f g + (b_2-k^2) f - k + k}{k^2 f g - k f - k g + (b_2+1)}, \\
g &\mapsto  g,
\end{aligned}
\right. 
&& \begin{array}{cc}
r_1.b_1 =  \frac{(k^2-1)^2 - b_1 b_2}{k^2-1 -b_2}, & r_1.b_3 = \frac{1 - b_2 b_3}{k^2-1 - b_2},\\
r_1.b_2 =  k^2-1 - b_2,    & r_1.b_4 =  \frac{1 - b_2 b_4}{k^2-1 - b_2},
\end{array}
\\
& r_2 : \left\{ \begin{aligned}
f &\mapsto  \frac{g-k}{k g -1}, \\
g &\mapsto  \frac{f-k}{k f - 1},
\end{aligned}
\right. 
&& \begin{array}{cc}
r_2.b_1 =  \frac{(k^2-1)^2}{b_1}, &  r_2.b_3 =  \frac{1}{b_3(k^2-1)^2},\\
r_2.b_2 =  \frac{(k^2-1)^2}{b_2},   & r_2.b_4 = \frac{1}{b_4(k^2-1)^2},
\end{array}
\\
& r_3 : \left\{ \begin{aligned}
f &\mapsto  \frac{f(k-g)\left( (k^2-1)b_3 - 1\right)}{f(k g - 1)- b_3 (k^2-1)(g-k)}, \\
g &\mapsto  g,
\end{aligned}
\right. 
&& \begin{array}{cc}
r_3.b_1 =  \frac{(k^2-1)(b_1 b_3-1)}{(k^2-1) b_3- 1}, & r_3,b_2 =  \frac{(k^2-1)(b_2 b_3-1)}{(k^2-1) b_3- 1},  \\
r_3.b_3 = (k^2-1)^{-1} - b_3,  & r_3.b_4 =  \frac{(k^2-1)^2 b_3 b_4 -1}{ (k^2-1)^2 b_3 - k^2+1}, 
\end{array}
\\
& r_4 : \left\{ \begin{aligned}
f &\mapsto  \frac{ b_4 (k^2-1) f( k g -1) - g + k}{(k g -1)( b_4(k^2-1) - 1)}, \\
g &\mapsto  g,
\end{aligned}
\right.
&& \begin{array}{cc}
r_4.b_1 =  \frac{(k^2-1)^2(1 - b_1 b_4)}{(k^2-1)b_4 -1}, & r_4.b_2 =  \frac{(k^2-1)^2(1 - b_2 b_4)}{(k^2-1)b_4 -1},  \\
 r_4.b_3 = \frac{(k^2-1)^2 b_3 b_4 - 1}{1-k^2 - b_4(k^2-1)^2}, &  r_4.b_4 = (k^2-1)^{-1} - b_4, 
\end{array}
\end{align*}

\begin{align*}
& \pi_{(01)}:  \left\{ \begin{aligned}
f &\mapsto  f, \\
g &\mapsto  1/g,
\end{aligned}
\right. 
&& \begin{array}{ccc}
\pi_{(01)}.k = 1 /k, & \pi_{(01)}.b_1 = -b_2 / k^2, & \pi_{(01)}.b_3 = -k^2 b_3, \\
 &        			 \pi_{(01)}.b_2 = -b_1 / k^2, & \pi_{(01)}.b_4 = -k^2 b_4,
\end{array}
\\
& \pi_{(34)}:  \left\{ \begin{aligned}
f &\mapsto  1/f, \\
g &\mapsto  g,
\end{aligned}
\right. 
&& \begin{array}{ccc}
\pi_{(34)}.k = \frac{1}{k}, & \pi_{(34)}.b_1 = -b_1/k^2,& \pi_{(34)}.b_3 = -k^2 b_4, \\
&                		 \pi_{(34)}.b_2 = - b_2 / k^2, & \pi_{(34)}.b_4 = -k^2 b_3,
\end{array}
\\
& \pi_{(14)}:  \left\{ \begin{aligned}
f &\mapsto  \frac{f (k^2-1)^{1/2}}{k f - 1}, \\
g &\mapsto  \frac{g (k^2-1)^{1/2}}{k g - 1},
\end{aligned}
\right. 
&& \begin{array}{ccc}
\pi_{(14)}.k = k /(k^2-1)^{1/2}, & \pi_{(14)}.b_1 = b_1/(k^2-1)^2, & \pi_{(14)}.b_2 = b_4, \\
 &               		 \pi_{(14)}.b_3 = (k^2 -1)^2 b_3, & \pi_{(14)}.b_4 =b_2.
\end{array}
\end{align*}
Through the pullbacks of these birational maps and the identification of the Picard groups, this representation corresponds to the action of $\widetilde{W}(D_4^{(1)})$ on $\Pic(X)$ by Cremona isometries.
\end{proposition}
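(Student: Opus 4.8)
The plan is to proceed in two stages: first to establish the surface type and genericity, then to verify that the eight listed birational maps realise the Cremona action of $\widetilde{W}(D_4^{(1)})$. For the first stage I would observe that the basepoint configuration for $X_{\mathbf{b}}$ has exactly the same combinatorial structure of infinitely near points as the special family of the preceding section: in each case $p_5, p_6, p_7, p_8$ lie on the exceptional lines over $p_1, p_2, p_3, p_4$ respectively, the only change being that the coordinates $(k^2-1)(2n+1)$ and $0$ are replaced by the free parameters $b_1, b_2, b_3, b_4$. Consequently the intersection-theoretic calculations carry over verbatim: the divisors \eqref{components} still represent the classes \eqref{hoffmanssurfaceroots}, the proper transform of the bidegree-$(1,1)$ curve \eqref{11curve} through $p_1,\dots,p_4$ still avoids $p_5,\dots,p_8$, and $D = D_0 + D_1 + 2D_2 + D_3 + D_4$ is again the unique anticanonical divisor, of canonical type and generalised Cartan matrix of type $D_4^{(1)}$, with symmetry root basis \eqref{hoffmanssymmetryroots}. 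Genericity --- that these are $D_4^{(1)}$-surfaces with the maximal number of five parameters --- follows from the independence of $(k, b_1, b_2, b_3, b_4)$ as moduli; this is made precise in subsection 4.2, where the root-variable parametrisation is obtained by transforming to the standard form of \cite{KNY2017}.

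For the second stage, the strategy is to check, one generator at a time, that each birational map realises the expected element of $\widetilde{W}(D_4^{(1)})$ acting on $\Pic(X)$: each $r_j$ should induce the simple reflection $r_{\alpha_j}$ in the root \eqref{hoffmanssymmetryroots}, while $\pi_{(01)}, \pi_{(34)}, \pi_{(14)}$ should induce the Dynkin diagram automorphisms permuting the simple roots accordingly, together with the implied permutations of exceptional classes. Concretely, for a fixed generator $w$ I would (i) resolve the indeterminacy of the coordinate formula on $\p^1\times\p^1$ and track the images of the eight basepoints $p_1,\dots,p_8$ of $X_{\mathbf{b}}$, verifying that they map to the basepoints of $X_{w.\mathbf{b}}$ up to the relevant permutation, so that $w$ lifts to an isomorphism of the blown-up surfaces; (ii) read off the induced parameter transformation $w.k$, $w.b_i$ directly from how the basepoint coordinates transform, confirming agreement with the stated formulas; and (iii) compute the pullback on a spanning set of $\Pic(X)$ and match it against the reflection or automorphism formula. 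These computations are of exactly the type carried out in the construction of the space of initial values and the Cremona isometry for \eqref{hoffmans}.

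The bulk of the work, and the main obstacle, is simply the volume of explicit chart computations: eight generators, each requiring resolution of indeterminacies and careful tracking of infinitely near points. The reflections $r_0, r_1, r_3, r_4$ fix $g$ and act by Cremona transformations in the fibre direction, while $r_2$ is the involution $(f,g)\mapsto\left(\frac{g-k}{kg-1}, \frac{f-k}{kf-1}\right)$ exchanging the two factors; the automorphisms $\pi_{(01)}$ and $\pi_{(34)}$ are the coordinate inversions $g\mapsto 1/g$ and $f\mapsto 1/f$. The most delicate case is $\pi_{(14)}$, whose action mixes the two $\p^1$-factors through a simultaneous M\"obius transformation and introduces the factor $(k^2-1)^{1/2}$, so that verifying it sends basepoints to basepoints and reproduces the stated action on $k$ and the $b_i$ requires the greatest care.

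Finally, once each generator has been matched in step (iii) to the reflection or diagram automorphism by which $\widetilde{W}(D_4^{(1)})$ acts on $\Pic(X)$ (Theorem 2.4), the compatibility of pullback with composition shows that the eight birational maps generate a representation realising the full Cremona action of the symmetry group. As a useful consistency check one may verify directly that the generators satisfy the defining relations of $\widetilde{W}(D_4^{(1)})$ as birational maps on parameters and coordinates, which together with the identifications of step (iii) confirms that the representation is precisely the Cremona action of $\widetilde{W}(D_4^{(1)})$ on $\Pic(X)$.
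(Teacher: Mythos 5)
Your proposal is correct and takes essentially the same approach as the paper, whose own proof is only a sketch stating that each generator's birational map ``may be verified to give the required change of blowing-down structure and parameters by similar calculations to those involved in the proofs of Proposition 3.2 and Theorem 3.3'', with a pointer to the literature for general methods of constructing Cremona actions. Your two-stage plan---carrying over the Section 3 intersection-theoretic computations to establish the $D_4^{(1)}$ surface type and root bases, then per-generator chart calculations resolving indeterminacies, tracking basepoints and parameters, and matching the induced action on $\Pic(X)$ with the corresponding reflection or diagram automorphism---is exactly that argument, spelled out in more detail than the paper itself provides.
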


\begin{proof}
The birational mapping defined by the action of each generator may be verified to give the required change of blowing-down structure and parameters by similar calculations to those involved in the proofs of Proposition 3.2 and Theorem 3.3. For helpful discussions of the methods for constructing such a Cremona action we refer the reader to, for example, one of \cite{SAKAI2001, MSY2003,DT2018,JNS2016,KNY2017}.
\end{proof}

Before we construct the generic version, we illustrate how to recover the system \eqref{hoffmans} from the action of $\varphi$ on the variables and parameters. To reconstruct the equation using this birational representation, we compute the action of the element $\varphi = r_{201234}$ on the variables and parameters. For composing the actions of different elements of the group, we note that $\widetilde{W}(D_4^{(1)})$ acts on rational functions of the variables and parameters by replacement, so for $w \in \widetilde{W}(D_4^{(1)})$ and rational function $R \in \C ( f, g ; \mathbf{b})$, we have
\begin{equation}
w. R(f, g ; \mathbf{b}) = R ( w.f, w.g ; w. \mathbf{b}).
\end{equation}
Using the formulae given in proposition 4.1, we may compute $\varphi.f$ and $\varphi.g$, which are complicated rational functions of the variables and parameters, which we omit for the moment for conciseness. Direct computation shows that $\varphi.\mathbf{b} = \left( \varphi.k, \varphi.b_1, \varphi.b_2, \varphi.b_3,\varphi.b_4 \right)$ is given by
\begin{subequations} \label{phib}
\begin{align}
\varphi.k &= k, \\
 \varphi.b_1 &= b_2 + \frac{b_2 b_3 - 1}{b_3 + t} + \frac{b_2 b_4 -1}{b_4 + t} + \frac{b_1 - b_2}{
 1 + b_1 t}, \\
 \varphi.b_2 &= b_1 + \frac{ b_1 b_3 -1}{b_3 + t} + \frac{b_1 b_4 -1}{b_4 + t} + \frac{b_2 - b_1}{
 1 + b_2 t}, \\
 \varphi.b_3 &= \frac{ (1 + b_1 t) (1 + b_2 t) (b_3 + t) b_4 }{(b_4 + t) \left(2 b_1 b_2 b_3 t + b_1 b_2 t^2 +(b_1 + b_2) b_3 -1 \right)}, \\
 \varphi.b_4 &= \frac{ (1 + b_1 t) (1 + b_2 t) (b_4 + t) b_3 }{(b_3 + t) \left(2 b_1 b_2 b_4 t + b_1 b_2 t^2 +(b_1 + b_2) b_4 -1 \right)},
\end{align}
\end{subequations}
where we have denoted $t = 1/ (1-k^2)$. We note that this is not a translational motion in the parameter space, but if we restrict to the case when $\mathbf{b}= (k, b_1, b_1, 0, 0)$, then $\varphi$ acts on this parameter subspace by translation, with 
\begin{equation}
\varphi.(k, b_1, b_1, 0, 0)= (k, b_1 - 2 / t, b_1- 2 / t, 0, 0).
\end{equation}
Therefore the result of iterating the action on parameters $n$ times is given by $\varphi^n .(k, b_1, b_1, 0, 0) = (k, b_1 - 2n/t , b_1 - 2n/t, 0,0 )$, so letting $f_n = \varphi^n.f, g_n = \varphi^n.g_n$, we obtain the following system of difference equations.:
\begin{subequations}
\begin{align}
\bar{f} &= \frac{ (\bar{g}-k)^2}{f ( k \bar{g} - 1)^2}, \\
\bar{g} &= \frac{(f-k)\left( k(1-k^2) f g + (b_1 + 2n(k^2-1) ) f + (k^2-1) g - k(b_1+ 2n(k^2-1)) \right)}{(kf-1)\left( (k(b_1 + 2n(k^2-1)) f g + (1-k^2)f - (b_1 + 2n(k^2-1))g + k (k^2-1) \right) }.
\end{align}
\end{subequations}
Setting the initial value of the parameter to be $b_1 = k^2-1 = -1/t$, we recover \eqref{hoffmans}, as well as the family of surfaces $X_{(k,n)}$ constructed as its space of initial values in section 3. 

\begin{remark}
Firstly, we note that the process by which we recovered the system \eqref{hoffmans} from the birational representation is a projective reduction, in the sense that the system is given by the Cremona action of a non-translation element of infinite order, with a restriction to a parameter subspace on which the action is translational. However, with the root variable parametrisation, the action of $\varphi$ becomes translational on a larger parameter subspace than the one defined by the constraint $\mathbf{b} = (k, b_1, b_1 , 0 ,0 )$. This differs from the example \cite{KNT2011} described in the introduction, in which the projection was onto the maximal parameter subspace on which the half-translation $R$ gave translational motion. This, together with the fact that \eqref{hoffmans} was recovered by a special choice of the initial value of the parameter $b_1$, leads us to interpret the system as a special case of a projective reduction of the discrete Painlev\'e equation associated with the translation $T_{2v}$. \\

\end{remark}

\subsection{Transformation to the canonical model and root variable parametrisation}
To write down a difference equation from the action of $\varphi$ on the variables and parameters in the generic case, we let $(f_n, g_n) = \varphi^{n}. (f,g)$ be the result of acting by $\varphi$ on the variales $n$ times, and seek an explicit form of $\varphi^{n} .\mathbf{b}$ as a function of $n$ with parameters $\mathbf{b} = (k, b_1,b_2,b_3,b_4)$. Obtaining this expression amounts to solving the system of difference equations \eqref{phib}, which are nonlinear in their present form. However, if we change our parametrisation of the family of surfaces to that given by the root variables, the transformation of parameters will linearise the system \eqref{phib} and it will be explicitly solvable by elementary methods. This will be the case for the action of any element of infinite order on the root variables, so difference equations can always be explicitly constructed from these elements, as we will demonstrate. \\

Rather than obtaining the root variable parametrisation from the period map by directly computing integrals, we will obtain a transformation to the canonical form of a family of generic $D_4^{(1)}$ surfaces with root variable parametrisation given by Kajiwara, Noumi and Yamada \cite{KNY2017}, which we call the KNY form. Incidentally, this is Okamoto's space of initial conditions for $\pain{VI}$, so reveals \eqref{hoffmans} as a B\"acklund transformation for a special case. Our method for constructing this transformation will be to first find a map between $\Pic(X)$ and the corresponding module for the KNY form of the family, which identifies the surface and symmetry roots, then obtain a birational map that induces it. This is similar to that employed in the recent work \cite{DT2018}, though here we emphasise its application in obtaining the root variables, direct computation of which is not practical in our case. \\

We begin by recalling the point configuration and root data for the KNY family of generic $D_4^{(1)}$-surfaces. Considering $\C^2$ with coordinates $(x,y)$ and compactifying to $\p^1 \times \p^1$, we obtain the family of surfaces $Y_{\mathbf{a}}$ with root variable parametrisation $\mathbf{a} = (t , a_0, a_1, a_2, a_3, a_4 )$, by blowing up eight points given in coordinates by
\begin{subequations}
\begin{align}
\tilde{p}_1 : (x,y) &= (\infty , - a_2), &&\quad \tilde{p}_2 : (x,y) = (\infty, - a_1 - a_2), \\  
\tilde{p}_3 : (x,y) &= (t, \infty) , &&\quad \tilde{p}_4 : (\tilde{u}_3,\tilde{v}_3) = \left(y(x-t) , \frac{1}{y} \right) = \left( t a_0, 0 \right), \\  
\tilde{p}_5 : (x,y) &= (0,0), &&\quad \tilde{p}_6 : (x,y) = \left( 0, a_4 \right), \\  
\tilde{p}_7 : (x,y) &= (1, \infty), &&\quad \tilde{p}_8 : (\tilde{u}_7,\tilde{v}_7) = \left(y(x-1) , \frac{1}{y} \right) = \left( a_3, 0 \right).
\end{align}
\end{subequations} 
Denoting divisor classes of total transforms of lines of constant $x, y$ by $\H_x, \H_y$ respectively and the exceptional classes arising from the blowups of points $\tilde{p}_1, \dots, \tilde{p}_8$ by $\L_1, \dots, \L_8$, we have
\begin{equation}
\Pic(Y) = \Z \H_x \oplus \Z \H_y \oplus \Z \L_1 \oplus \dots \oplus \Z \L_8,
\end{equation}
with the surface root basis $\{\tilde{\delta}_0, \dots , \tilde{\delta}_4\}$ and symmetry root basis $\left\{\tilde{\alpha}_0, \dots , \tilde{\alpha}_4\right\}$ given by 
\begin{subequations}
\begin{align}
\tilde{\delta}_0 &= \L_3 - \L_4, \quad \quad  &&\tilde{\alpha}_0 = \H_x - \L_3 - \L_4, \\
\tilde{\delta}_1 &= \H_x - \L_1 - \L_2, \quad \quad &&\tilde{\alpha}_1 = \L_1 - \L_2, \\
\tilde{\delta}_2 &= \H_y - \L_3 - \L_7, \quad \quad &&\tilde{\alpha}_2 = \H_y - \L_1 - \L_5, \\
\tilde{\delta}_3 &= \L_7 - \L_8, \quad \quad &&\tilde{\alpha}_3 = \H_x - \L_7 - \L_8, \\
\tilde{\delta}_4 &= \H_x - \L_5 - \L_6,\quad \quad  &&\tilde{\alpha}_4 = \L_5 - \L_6.
\end{align}
\end{subequations}
The following proposition, which provides an isomorphism of modules preserving the surface and symmetry root lattices associated with the two families of generic $D_4^{(1)}$-surfaces, may be proved by direct calculation.
\begin{proposition}
The linear map $\psi : \text{Pic}(X) \rightarrow \text{Pic}(Y)$ defined by
\begin{equation}
\psi : \left\{ \begin{array}{lcl}
\H_f & \mapsto & \H_x + \H_y -  \L_1-\L_5, \\
\H_g & \mapsto & \H_x,\\
\E_1 & \mapsto & \L_3, \\
\E_2 & \mapsto & \H_x - \L_1, \\
\E_3 & \mapsto & \L_7, \\
\E_4 & \mapsto & \H_x- \L_5, \\
\E_5 & \mapsto &  \L_4, \\
\E_6 & \mapsto &  \L_2, \\
\E_7 & \mapsto & \L_8, \\
\E_8 & \mapsto & L_6,
\end{array}
\right.
\end{equation}
preserves the intersection product and identifies the canonical classes in $\Pic(X), \Pic(Y)$. \\
Further, we have that 
\begin{equation}
\psi (\alpha_i) = \tilde{\alpha}_i, \quad \quad \psi(\delta_i) = \tilde{\delta}_i.
\end{equation}
\end{proposition}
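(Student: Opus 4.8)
The plan is to verify all four assertions directly, exploiting the fact that $\Pic(X)$ and $\Pic(Y)$ are free $\Z$-modules of rank $10$ whose intersection forms are essentially diagonal in the given generators: the exceptional classes are pairwise orthogonal with self-intersection $-1$, and the only nonzero off-diagonal pairings are $\H_f\cdot\H_g=1$ and $\H_x\cdot\H_y=1$. Since a $\Z$-linear map is determined by its values on a basis, $\psi$ is automatically well defined, and every claim reduces to a finite computation on the images of the ten generators.

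First I would show that $\psi$ preserves the intersection form by comparing Gram matrices. The essential simplification is that six of the images, namely $\psi(\E_1),\psi(\E_3),\psi(\E_5),\psi(\E_6),\psi(\E_7),\psi(\E_8)$, are the distinct single exceptional classes $\L_3,\L_7,\L_4,\L_2,\L_8,\L_6$; these are pairwise orthogonal with self-intersection $-1$ and orthogonal to $\H_x,\H_y$, so every pairing among them, and every pairing between them and the $\H$-components of the other images, is immediate and matches the domain. The only genuinely mixed images are $\psi(\H_f)=\H_x+\H_y-\L_1-\L_5$, $\psi(\H_g)=\H_x$, $\psi(\E_2)=\H_x-\L_1$ and $\psi(\E_4)=\H_x-\L_5$, and these all lie in $\mathrm{span}\{\H_x,\H_y,\L_1,\L_5\}$, a subspace involving only the indices $\{1,5\}$ disjoint from those of the six single-class images. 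It therefore suffices to compute the $4\times 4$ Gram block of these four elements (for instance $\psi(\H_f)^2 = 2\,\H_x\cdot\H_y+\L_1^2+\L_5^2 = 0$, $\psi(\H_f)\cdot\psi(\H_g)=1$, $\psi(\E_2)^2=\psi(\E_4)^2=-1$, and the cross terms vanish) and confirm it agrees with the block for $\H_f,\H_g,\E_2,\E_4$, which it does.

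Next I would identify the canonical classes by expanding $\psi(\mathcal{K}_X)=\psi(-2\H_f-2\H_g+\E_1+\dots+\E_8)$ and collecting coefficients; the crucial point is that the $+2\L_1+2\L_5$ produced by $-2\psi(\H_f)$ is reduced to $+\L_1+\L_5$ by the $-\L_1$ and $-\L_5$ in $\psi(\E_2)$ and $\psi(\E_4)$, yielding $-2\H_x-2\H_y+\L_1+\dots+\L_8=\mathcal{K}_Y$. Finally, the equalities $\psi(\alpha_i)=\tilde\alpha_i$ and $\psi(\delta_i)=\tilde\delta_i$ follow by substituting the definitions of the $\alpha_i,\delta_i$ from \eqref{hoffmanssymmetryroots} and \eqref{hoffmanssurfaceroots} into $\psi$: for example $\psi(\delta_0)=\psi(\E_1-\E_5)=\L_3-\L_4=\tilde\delta_0$, $\psi(\alpha_2)=\psi(\H_f-\H_g)=\H_y-\L_1-\L_5=\tilde\alpha_2$, and $\psi(\delta_2)=\psi(\H_f+\H_g-\E_1-\E_2-\E_3-\E_4)=\H_y-\L_3-\L_7=\tilde\delta_2$, with the remaining cases identical in character.

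There is no conceptual obstacle here, since the proposition itself signals that the verification is a direct calculation; the only real work is organizing the bookkeeping transparently. The main simplification I would lean on is the decomposition of the generators into the six whose images are single exceptional classes, which contribute only trivially, and the four whose images lie in $\mathrm{span}\{\H_x,\H_y,\L_1,\L_5\}$, which carry all the nontrivial content of the intersection and canonical-class computations. Presenting the proof around this split keeps the Gram-matrix check short and makes the cancellation in $\psi(\mathcal{K}_X)$ and in each root identification easy to read off.
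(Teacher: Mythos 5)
Your proposal is correct and takes essentially the same approach as the paper: the paper gives no written argument at all, stating only that the proposition ``may be proved by direct calculation,'' and your verification is precisely that calculation carried out correctly (the Gram-matrix check via the split into the six single-exceptional-class images and the four images in $\mathrm{span}\{\mathcal{H}_x,\mathcal{H}_y,\mathcal{L}_1,\mathcal{L}_5\}$, the cancellation giving $\psi(\mathcal{K}_X)=\mathcal{K}_Y$, and the substitutions $\psi(\alpha_i)=\tilde{\alpha}_i$, $\psi(\delta_i)=\tilde{\delta}_i$ all check out). Your organization of the bookkeeping is a useful addition, but it is not a different method.
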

We next demonstrate how to find a change of variables inducing this map, which will determine the relation between parameters $\mathbf{a}$ and $\mathbf{b}$.

\begin{proposition}
The map $\psi$ is induced by the pushforward $\Psi_*$ of 
\begin{equation}
\Psi : X_{\mathbf{b}} \rightarrow Y_{\mathbf{a}},
\end{equation}
defined by
\begin{equation} \label{fgtoxy}
x = \frac{k g- 1}{k^2-1}, \quad \quad y = \frac{a_2}{k} \frac{(1 - k f)( k g -1)}{(k f g - f - g + k )} ,
\end{equation}
with the correspondence between parameters $\mathbf{a}$ and $\mathbf{b}$ given by
\begin{equation} \label{parametercorrespondence}
t = \frac{1}{1-k^2}, \quad b_1 = -\frac{ a_0+a_2 }{ t a_0}, \quad b_2 = -\frac{a_1 + a_2}{t a_1}, \quad b_3 = -\frac{t( a_2+a_3)}{a_3}, \quad b_4 = -\frac{t( a_2+a_4)}{a_4}.
\end{equation}
\end{proposition}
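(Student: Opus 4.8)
The plan is to realise the lattice isomorphism $\psi$ of Proposition 4.3 as the pushforward of an honest birational map, and to read off the parameter correspondence from the way $\Psi$ matches the two blow-up structures. Since $\psi$ already preserves the intersection form, the canonical class and the root data, the content of the present statement is the \emph{geometric} one that $\psi$ is induced by a concrete change of variables, together with the resulting dictionary between $\mathbf{a}$ and $\mathbf{b}$. Rather than verifying the given formulae blindly, I would \emph{derive} $\Psi$ from the action of $\psi$ on the line classes, and then pin down the parameters by tracking basepoints, exactly as in the method of Proposition 3.1.

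First I would use the images of the line classes to constrain $\Psi$. The relation $\psi(\H_g)=\H_x$ says that curves of constant $g$ must be carried to curves of constant $x$, so $x$ is a function of $g$ alone; as the map has degree one in each factor this forces $x$ to be a M\"obius transformation of $g$, which one fixes to be $x=(kg-1)/(k^2-1)$. The relation $\psi(\H_f)=\H_x+\H_y-\L_1-\L_5$ records that a generic curve of constant $f$ maps to a $(1,1)$-curve through $\tilde p_1$ and $\tilde p_5$; equivalently, pulling back curves of constant $y$ produces $(1,1)$-curves in $(f,g)$, which fixes the bidegree of $y$ and the incidences it must satisfy. Writing the corresponding ansatz with these data and imposing the incidences yields the stated expression for $y$ up to the undetermined overall parameters.

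Next I would determine the parameter correspondence by demanding that $\Psi$ send the eight basepoints $p_1,\dots,p_8$ of $X_{\mathbf{b}}$ to those of $Y_{\mathbf{a}}$ in the pattern dictated by $\psi$ on the exceptional classes: for instance $\psi(\E_1)=\L_3$ forces $\Psi(p_1)=\tilde p_3$, and $\psi(\E_1-\E_5)=\tilde\delta_0=\L_3-\L_4$ forces the infinitely near point $p_5$ to go to $\tilde p_4$. For the proper points this is a direct substitution; for the infinitely near points $p_5,\dots,p_8$ one must work in the blow-up charts $(u_i,v_i)$ and compute the limiting values of the corresponding charts $(\tilde u_j,\tilde v_j)$ on $Y$ as $v_i\to0$. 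Each such matching produces one relation between $\mathbf{a}$ and $\mathbf{b}$: tracking $p_1\to\tilde p_3$ fixes $t=1/(1-k^2)$, and evaluating the chart $\tilde u_3=y(x-t)$ along $E_1$ at $u_1=b_1$ and equating to the location $t a_0$ of $\tilde p_4$ gives $b_1=-(a_0+a_2)/(t a_0)$. The analogous computations at the pairs $(p_2,p_6)$, $(p_3,p_7)$ and $(p_4,p_8)$ produce the remaining three relations in \eqref{parametercorrespondence}.

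Finally, with $\Psi$ and the basepoint dictionary in hand, the pushforward $\Psi_*$ on $\Pic(X)$ is completely determined, and I would confirm $\Psi_*=\psi$ by checking it on the generators $\H_f,\H_g,\E_1,\dots,\E_8$: the images of $\H_f,\H_g$ follow from the bidegree and base locus of $\Psi$, while the images of the $\E_i$ follow from the point correspondence already established. The hard part will be the bookkeeping at the infinitely near points: one must verify not merely that $\Psi$ lands each basepoint at the right coordinates but that it respects the infinitely near structure (the tangent direction recorded by the second blow-up) and introduces no spurious additional basepoints, so that the whole surface-root basis $\{\delta_i\}$ and symmetry-root basis $\{\alpha_i\}$ are matched with $\{\tilde\delta_i\}$ and $\{\tilde\alpha_i\}$. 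This is precisely the step where the local chart computations of Section 3 are indispensable.
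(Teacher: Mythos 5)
Your proposal matches the paper's proof in both strategy and substance: the paper likewise fixes $x$ as a M\"obius function of $g$ alone from $\psi(\H_g)=\H_x$, fixes the bilinear-over-bilinear ansatz for $y$ from the requirement that curves of constant $y$ pull back to $(1,1)$-curves through $p_2$ and $p_4$, determines the coefficients and $t=1/(1-k^2)$ from the line and basepoint correspondences, and then obtains $b_1,\dots,b_4$ exactly as you describe, by writing the induced map between the blow-up charts $(u_1,v_1)$ and $(\tilde{u}_3,\tilde{v}_3)$ and matching the infinitely near points $p_5\mapsto\tilde{p}_4$, etc., before a final verification that $\Psi_*=\psi$ on all generators. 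The only cosmetic difference is bookkeeping: the paper uses three of the four special lines of constant $g$ to pin down the M\"obius transformation and extracts $t=1/(1-k^2)$ as a consistency condition from the fourth, whereas you fix the form of $x$ first and read $t$ off from $\Psi(p_1)=\tilde{p}_3$, which amounts to the same set of equations.
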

\begin{proof}
We will obtain the transformation by lifting a map defined by 
\begin{equation}
(f,g) \mapsto \left( x, y \right), 
\end{equation}
where $x$ and $y$ are rational functions of $f,g$ and the parameters. The form of these rational functions is determined by the condition that $\Psi_* = \psi$. In particular, as $\Psi_*(\H_g) = \H_x$, the total transform in $X_{\mathbf{b}}$ of a line of constant $g$ must be sent under $\Psi$ to the total transform of a curve of constant $x$ in $Y_{\mathbf{a}}$. This implies that the rational function $x(f,g)$ is a fractional linear transformation of $g$ independent of $f$, and we obtain our ansatz for $x$:
\begin{equation} \label{ansatzx}
x = \frac{ \lambda_1 g + \lambda_2}{\lambda_3 g + \lambda_4},
\end{equation}
where $\lambda_1, \dots ,\lambda_4$ are complex constants to be determined. Further, as $\Psi_*(\H_f) = \H_x + \H_y - \L_1 - \L_5$, the total transform in $X_{\mathbf{b}}$ of a line of constant $f$ must be sent to the proper transform of a curve of bi-degree $(1,1)$ in $(x,y)$ passing through $\tilde{p}_1$ and $\tilde{p}_2$ each with multiplicity one. Taking into account the fact that $x$ and $g$ are related by M\"obius transformation, the desired correspondence dictates that $y$ be of the form
\begin{equation}\label{ansatzy}
y = \frac{\mu_1 f g + \mu_2 f + \mu_3 g + \mu_4}{\mu_5 f g + \mu_6 f + \mu_7 g + \mu_8},
\end{equation}
where $\mu_1, \dots, \mu_8$ are to be determined. \\
We now demonstrate how the unknown coefficients in \eqref{ansatzx} and \eqref{ansatzy} can be deduced from the requirement that $\Psi$ induces $\psi$ by pushforward. Similarly to the way in which the Cremona isometry in Theorem 3.3 was deduced from the birational maps by one of many possible sequences of calculations, here we also have the freedom to choose how we deduce the conditions on the coefficients, and present one which simplifies calculations.\\
We first note that the general forms \eqref{ansatzx} and \eqref{ansatzy} determine the coefficients of $\H_x, \H_y$ in the pushforwards of $\H_f, \H_g$, but not any of the exceptional classes. Given that 
\begin{equation}
\psi^{-1}(\H_y) = \H_f + \H_g - \E_2 - \E_4,
\end{equation}
we see that an arbitrary line of constant $y$ should be sent by $\Psi^{-1}$ to a $(1,1)$-curve in $(f,g)$ coordinates passing through $p_2$ and $p_4$, each with multiplicity one. Indeed, letting $c \in \C$ be arbitrary and setting $y = c$ in \eqref{ansatzy} we obtain a $(1,1)$-curve, which we rewrite in the chart $(F,G)$, in which $p_2, p_4$ are visible, as
\begin{equation}
c \left( \mu_5 + \mu_6 G + \mu_7 F + \mu_8 F G\right) = \mu_1 + \mu_2 G + \mu_3 F + \mu_4 F G.
\end{equation}
Requiring that this intersects $p_2 : (F, G) = (k,0)$, we obtain
\begin{equation}
c( \mu_5 + k \mu_7) = \mu_1 + k \mu_3.
\end{equation}
Since the curve must intersect $p_2$ independent of the value of the constant $c$, we require
\begin{equation}
\mu_1 = -k \mu_3, \quad \quad \mu_5 = - k \mu_7.
\end{equation}
Similarly, the requirement that any such curve should pass through $p_4 : (F,G) = (0,k)$ yields the conditions
\begin{equation}
\mu_6 = \mu_7, \quad \quad \mu_3 = \mu_2,
\end{equation}
after which our ansatz reads
\begin{equation} \label{refinedy}
y = \frac{ \mu_4 (k f g - f - g) + k \mu_1}{k (k f g - f -g + k )}.
\end{equation}
Before finding the coefficients $\mu_1, \mu_4$, it will be convenient to first determine the fractional linear transformation \eqref{ansatzx} relating $x$ and $g$. \\
We consider lines corresponding to special values of $x$ and $g$, whose proper transforms give unique representatives of their divisor classes. For example, as $\psi(\H_g - \E_3) = \H_x - \L_7$, the line $g=k$ should be sent under $\Psi$ to the line $x=1$. Computing this condition explicitly, we obtain
\begin{equation}
\lambda_4 = k (\lambda_1 - \lambda_3) + \lambda_2.
\end{equation}
Similarly, from the fact that $\psi(\H_g - \E_1) = \H_x - \L_3$, the lines $g=k$ and $x=t$ should be in correspondence, so we may obtain $\lambda_2$ in terms of the other coefficients and parameters, and our ansatz is refined:
\begin{equation} \label{refinedansatz}
x = \frac{ \lambda_1(t-1) g + k t (\lambda_3 - \lambda_1)}{ \lambda_3(t-1) g + k (\lambda_1 - \lambda_3)}.
\end{equation}
A third condition may be deduced from the fact that $\psi(\H_g - \E_2) = \L_1$, so the proper transform of the line $G= 0$ should be mapped to the exceptional divisor replacing $\tilde{p}_1$, and in particular setting $G=0$ in \eqref{refinedansatz} should recover its $x$-coordinate. Direct calculation yields $\lambda_3 = 0$, and therefore
\begin{equation} \label{fullx}
x = \frac{(1-t)g + k t}{k}.
\end{equation}
The parameters $k$ and $t$ each correspond to the `extra parameter' \cite{SAKAI2001} in their respective families of surfaces, and their relationship can be deduced as follows. We have already found the unknown coefficients in the relation between $g$ and $x$, by considering three special values of $g$ corresponding to the coordinates of the basepoints $p_1, p_2, p_3$. However, there is one more special line of constant $g$ which we have not considered, and we must verify that \eqref{fullx} is consistent with the required image of the corresponding divisor class under $\psi$. To be precise, the divisor class of the proper transform of the line $G=k$ is sent under $\psi$ according to $\psi(\H_g - \E_4) = \L_5$, so setting $G =k$ in \eqref{fullx} should recover the $x$-coordinate of $\tilde{p}_5$, namely $x=0$. This condition is computed to be equivalent to
\begin{equation}
t = \frac{1}{1-k^2},
\end{equation}
so we have obtained a necessary correspondence between the parameters $k$ and $t$ such that the birational map $\Psi : X_{\mathbf{b}} \rightarrow Y_{\mathbf{a}}$ induces $\psi$. \\
We now return to our expression \eqref{refinedy}, and use similar calculations to determine the remaining unknown coefficients. For instance, in order to satisfy the condition $\Psi_*(\H_g - \E_2) = \L_1$, we require that setting $G = 0$ with $F \neq k$ in \eqref{refinedy} and \eqref{fullx} gives the $(x,y)$ coordinates of $\tilde{p}_1$. We have already imposed this condition on the expression for $x$, so substitute in to \eqref{refinedy} to obtain 
\begin{equation}
\mu_4 = - k a_2.
\end{equation}
A similar calculation based on $\psi(\H_g - \E_4) = \L_5$ leads to the condition 
\begin{equation}
\mu_1 = -(a_1 + a_2)/k,
\end{equation}
and we have determined the birational map completely and obtained \eqref{fgtoxy}. \\
We now demonstrate how to obtain the correspondences between the remaining parameters. We require that $\Psi_*(\E_5) = \L_4$, so the basepoint $p_5$ on the exceptional line $E_1$ in $X_{\mathbf{b}}$ should be mapped under $\Psi$ to $\tilde{p}_4$ on the line $L_3$ in $Y_{\mathbf{a}}$. To compute this condition, we use the previously defined charts $(u_1, v_1)$ on the exceptional line $E_1$, and $(\tilde{u}_3, \tilde{u}_3)$ for $L_3$ given by
\begin{equation}
\tilde{u}_3 = y(x-t), \quad \quad \tilde{v}_3 = \frac{1}{y}.
\end{equation}
Substituting these coordinates in \eqref{fgtoxy}, we obtain 
\begin{equation} \label{correspondenceE1L3}
\tilde{u}_3 = \frac{a_2(k v_1 -1)(1- k^2 - k u_1 v_1)}{(k^2-1)(k^2-1 - u_1 + k u_1 v_1)}, \quad \quad \tilde{v}_3 = \frac{k v_1(k^2-1 - u_1 + k u_1 v_1)}{a_2(k v_1 - 1)(1- k^2 - k u_1 v_1)},
\end{equation}
in which setting $v_1 = 0$ leads to
\begin{equation}
\tilde{u}_3 = \frac{a_2}{k^2 - 1 - u_1}, \quad \quad \tilde{v}_3 = 0,
\end{equation}
which demonstrates that $\Psi$ indeed gives a one-to-one correspondence between the exceptional lines $E_1$ and $L_3$. Requiring that $p_5 : (u_1, v_1) = (b_1, 0)$ is mapped to $\tilde{p}_4 : (\tilde{u}_3, \tilde{v}_3 ) = (t a_0, 0 )$, we obtain the condition
\begin{equation}
b_1 = - \frac{ a_0 + a_2}{t  a_0 }.
\end{equation}
Similar calculations based on the requirements $\psi(\E_7) = \L_8, ~\psi(\E_6) = \L_2,~ \psi(\E_8) = \L_6$ yield the rest of the correspondences \eqref{parametercorrespondence}. The proof is completed by checking, using the same calculation methods as in previous parts, that the birational map obtained does indeed induce $\psi$ as its pushforward when parameters are identified according to \eqref{parametercorrespondence}.
\end{proof}
We note that this transformation, and in particular the parameter correspondence, recovers the root variable parametrisation of the family $X_{\mathbf{b}}$ associated with the simple root basis $\left\{ \alpha_0, \dots, \alpha_4 \right\} \subset \Pic(X)$. The map $\Psi$ gives a change in blowing-down structure of the $D_4^{(1)}$-surface $X = X_{\mathbf{b}}$, under which the period map is invariant, in the sense that $\chi_{X} (\lambda) = \chi_{\Psi(X)} (\Psi_{*}(\lambda))$ for $\lambda \in \Pic(X)$, as has been noted in \cite{DT2018}. Therefore we may obtain the root variables of the surface $X = X_{\mathbf{b}}$ as
\begin{equation}
\chi_{X} (\alpha_j) = \chi_{Y}(\Psi_*(\alpha_j)) = \chi_Y (\tilde{\alpha}_j) = a_j.
\end{equation}
In particular, the correspondence \eqref{parametercorrespondence} gives the root variable parametrisation of the family $X_{\mathbf{b}}$, which we write as $X_{\mathbf{a}}$ with $\mathbf{a} = (t, a_0, a_1, a_2, a_3, a_4)$, with the normalisation $a_0 + a_1 +2a_2 + a_3 + a_4 = 1$.

\begin{remark}
Solving the system \eqref{parametercorrespondence} for the root variables $a_0,\dots, a_4$, we see that they are given by rational functions of degree three in the parameters $b_1, \dots, b_4$, which suggests that if we were to compute them directly using the period mapping it would have involved evaluating complicated integrals, which this transformation has allowed us to avoid.
\end{remark}
%
We also note that the family of surfaces $Y_{\mathbf{a}}$ forms the space of initial values for $\pain{VI}$, with the coordinates $(x,y)$ related to the variables in the Hamiltonian form \cite{KNY2017} according to 
\begin{equation} \label{P6qp}
q = x, \quad \quad p = \frac{y}{x},
\end{equation}
with $t$ playing the role of the independent variable. In particular, the birational representation of $\widetilde{W}(D_4^{(1)})$ given in Proposition 4.1 corresponds to the B\"acklund transformation symmetries of $\pain{VI}$ in the sense that for each $w\in \widetilde{W}(D_4^{(1)})$, we have a birational map
\begin{equation}
w : X_{\mathbf{a}} \rightarrow X_{w.\mathbf{a}},
\end{equation}
through the action on the variables $(f,g)$ and root variables parameters, which we may conjugate by our transformation to obtain a birational map
\begin{equation}
\tilde{w} : Y_{\mathbf{a}} \rightarrow Y_{w.\mathbf{a}}, 
\end{equation}
which gives a B\"acklund transformation on the variables $(q,p)$ via the relation \eqref{P6qp}. We remark that it may be checked that conjugating the actions from Proposition 4.1 by the transformation $\Psi$ in this way recovers the birational representation of $\widetilde{W}(D_4^{(1)})$ on the family of surfaces $Y_{\mathbf{a}}$ given in \cite{KNY2017}.

\subsection{Full-parameter system}

We are now in a position to construct the full-parameter version of \eqref{hoffmans} using our Cremona action of $\widetilde{W}(D_4^{(1)})$ on the family of generic $D_4^{(1)}$-surfaces $X_{\mathbf{a}}$ with the root variable parametrisation. We emphasise again that the key to writing down an explicit form of the discrete system defined by the Cremona action of an element of the symmetry group is to obtain a closed form for the result of acting on the parameters $n$ times. In our case, we are able to do this as the system \eqref{phib} giving the action of the element $\varphi = r_{201234}$ on the parameters is linearised by the transformation to root variables, and we now have 
\begin{equation}
\varphi . (t, a_0, a_1, a_2, a_3, a_4) = (t, a_1, a_0, a_2+1, a_4 - 1, a_3 - 1).
\end{equation}
This leads to a linear difference equation for $\varphi^n.\mathbf{a}$, which we may solve explicitly by elementary methods to arrive at the following formulae:
\begin{subequations}
\begin{align}
\varphi^n.t &= t, \quad \quad \varphi^n.k = k, \\
\varphi^n. a_0 &= \frac{1}{2} \left( (1 + (-1)^n)a_0 + ( 1 - (-1)^n)a_1 \right) = \begin{cases} a_0 & \text{for } n \text{ even}, \\ a_1 & \text{for } n \text{ odd}, \end{cases} \\
\varphi^n. a_1 &= \frac{1}{2} \left( (1 - (-1)^n)a_0 + ( 1 + (-1)^n)a_1 \right) = \begin{cases} a_1 & \text{for } n \text{ even}, \\ a_0 & \text{for } n \text{ odd}, \end{cases} \\
\varphi^n. a_2 &= a_2 + n, \\
\varphi^n. a_3 &= \frac{1}{2} \left( (1 + (-1)^n)a_3 + ( 1 - (-1)^n)a_4 - 2n \right) = \begin{cases} a_3 - n & \text{for } n \text{ even}, \\ a_4 - n & \text{for } n \text{ odd}, \end{cases} \\
\varphi^n. a_4 &= \frac{1}{2} \left( (1 - (-1)^n)a_3 + ( 1 + (-1)^n)a_4 - 2n  \right) = \begin{cases} a_4 - n & \text{for } n \text{ even}, \\ a_3 - n & \text{for } n \text{ odd}. \end{cases}
\end{align}
\end{subequations}
Therefore by setting $f_n = \varphi^n .f$ and $g_n = \varphi^n .g$, we obtain the explicit form of our full-parameter version of equation \eqref{hoffmans}, which we write with $(f,g) = (f_n,g_n)$ and $(\bar{f},\bar{g}) = (f_{n+1}, g_{n+1})$ as follows:
\begin{subequations} \label{fullgenericequation}
\begin{align}
\bar{f} &= \frac{ (\bar{g} - k)}{(k \bar{g} -1)} \frac{ \left( (a_2 +  a_{34}^{(n+1)}) f ( k\bar{g}-1 ) + (\bar{g} - k)(a_{34}^{(n)} - n-1)\right)}{\left( (a_2 + a_{34}^{(n)})(\bar{g} - k) + f (k \bar{g}-1)(a_{34}^{(n+1)} - n - 1) \right)}, \\
\bar{g} &= \frac{(f - k)}{(kf - 1)} \frac{ \left( (f-k) (a_2 + n ) - a_{01}^{(n+1)}(k f g - f - g + k)\right)}{\left( g (k f - 1) (a_2 + n) + a_{01}^{(n)}( k f g - f -g +k)  \right)},
\end{align}
\end{subequations}
\begin{equation*}
\text{where} \quad \quad \quad a_{01}^{(n)} = \begin{cases} a_0 & \text{for } n \text{ even}, \\ a_1 & \text{for } n \text{ odd}, \end{cases} \quad \quad a_{34}^{(n)} = \begin{cases} a_3 & \text{for } n \text{ even}, \\ a_4 & \text{for } n \text{ odd}, \end{cases}
\end{equation*}
and $a_0 +a_1+2a_2 +a_3 + a_4 =1$. The parameter specialisation $(b_1, b_2, b_3, b_4) = (k^2-1, k^2-1, 0, 0)$ by which we recovered \eqref{hoffmans} in subsection 4.1 is given in terms of the root variables as $(a_0,a_1,a_2, a_3, a_4) = (1/2, 1/2, 0 ,0,0)$, substitution of which in \eqref{fullgenericequation} again recovers the original system. 

\section{Integrability and other examples from non-translation symmetries}

We now discuss the integrability of the full-parameter system \eqref{fullgenericequation}, as well as how similar systems may be obtained for other surface types in Sakai's list. Firstly, we note that system \eqref{fullgenericequation} has algebraic entropy zero, which follows from the same result of Takenawa \cite{TAKENAWA2000} guaranteeing this property for the discrete Painlev\'e equations arising from translations. If we consider the iterates $(f_n,g_n)$ of a discrete system given by the Cremona action of some element of infinite order in the extended affine Weyl group as rational functions of arbitrary initial conditions $(f,g)$, their degrees may be computed using the intersection product on $\Pic(X)$ by the method of Takenawa, which we now recall following \cite{TAKENAWA2000}.\\
Consider a rational map given in inhomogeneous coordinates $(f,g)$ for $\p^1 \times \p^1$ by
\begin{equation}
\begin{aligned}
\eta : &~~\p^1 \times \p^1 \rightarrow \p^1 \times \p^1, \\
&(f,g) \mapsto \left( P(f,g), Q(f,g) \right),
\end{aligned}
\end{equation}
and define its degree to be 
\begin{equation}
\deg(\eta) = \max \left\{ \deg P(f,g) , \deg Q(f,g) \right\},
\end{equation}
where the degree of a rational function $P(f,g)$ is defined as the maximum of the degrees of its numerator and denominator as bivariate polynomials. Letting the maps induced by iteration of the system \eqref{fullgenericequation} be $\eta_{n+1} : (f_n,g_n) \mapsto \left( f_{n+1}, g_{n+1} \right)$, we write the map giving the $n$th iterate as a function of arbitrary initial conditions $(f_0,g_0) = (f,g)$ as
\begin{equation}
\begin{aligned}
&\eta^{(n)} : ~~\p^1 \times \p^1 \rightarrow \p^1 \times \p^1,\\
\eta^{(n)} = \eta_{n} \circ \eta_{n-1} \circ &\dots \eta_{1} \circ \eta_0 : ~ (f,g) \mapsto \left( P_{n}(f,g), Q_{n}(f,g) \right),
\end{aligned}
\end{equation}
The sequence of degrees of iterates of the system is then given by $d_n = \deg \eta^{(n)}$, and its algebraic entropy \cite{BV1999} is defined as
\begin{equation}
\varepsilon = \lim_{n \rightarrow \infty} \frac{1}{n} \log d_n,
\end{equation}
in the cases where the limit exists. Such a discrete system is said to be integrable in the sense of vanishing algebraic entropy if the growth of the degrees is polynomial, or in other words if $\varepsilon = 0$.\\
Computing the first few iterates of the system \eqref{fullgenericequation}, we obtain the following:
\begin{subequations}
\begin{align}
\deg P_n(f,g) &= 1, 7, 21, 43, 73,  \dots \\
\deg Q_n(f,g) &= 1, 3, 13, 31, 57, \dots
\end{align}
\end{subequations}
This appears to be quadratic, which we may confirm by computing explicit forms for $\deg P_n$ and $\deg Q_n$, using the fact that the system lifts to a family of birational isomorphisms by construction. Letting this family be
\begin{equation}
\Phi_{n+1} : X_{\varphi^n.\mathbf{a}} \rightarrow X_{\varphi^{n+1}.\mathbf{a}},
\end{equation}
we obtain the maps $\eta_n$ through the blowup projections such that the following diagram commutes:
\begin{center}
\begin{tikzcd}[column sep=large]
X_{\varphi^n.\mathbf{a}} \arrow[r, "\Phi_{n+1}"] \arrow[d] & X_{\varphi^{n+1}.\mathbf{a}} \arrow[d] \\
\p^1 \times \p^1 \arrow[r , "\eta_{n+1}"] & \p^1 \times \p^1
\end{tikzcd}
\end{center}
The Cremona isometry induced by the pullbacks $\Phi_{n+1}^*$ is again denoted $\varphi$, and coincides with that given in theorem 3.3. We have the following useful formulae,
\begin{equation}
\deg P_n(f,g) =\varphi^{-n}( \H_f + \H_g) \cdot \H_f, \quad \quad \deg Q_n(f,g) =\varphi^{-n}( \H_f + \H_g) \cdot \H_g,
\end{equation}
in which we note that the inverse of $\varphi$ appears because we have defined $\varphi$ in terms of pullbacks, whereas these formulae in their original form \cite{TAKENAWA2000} considered the pushforwards. \\
Indeed, the linear map $\varphi^{-1} : \Pic(X) \rightarrow \Pic(X)$ gives a system of difference equations for the coefficients of $\H_f, \H_g, \E_1, \dots ,\E_8$ in $\varphi^{-n}(\H_f + \H_g)$, which we can solve and compute intersection numbers to obtain the degrees exactly as
\begin{equation}
\deg P_n(f,g) = 4 n^2 +2n +1, \quad \quad \deg Q_n(f,g) = 4 n^2 -2n +1.
\end{equation}
With these formulae, we have proven that the system \eqref{fullgenericequation} is integrable in the sense of vanishing algebraic entropy. In fact, any system obtained as the Cremona action of an element of the symmetry group of a family of $\mathcal{R}$-surfaces will have at most quadratic degree growth, which follows from the fact that the matrix giving the action on $\Pic(X)$ is a product of finite order matrices and therefore must have eigenvalues only on the unit circle in the complex plane.\\

We will now demonstrate how other systems may be obtained in the same way from non-translation elements of infinite order, through the birational representation of $\widetilde{W}( (A_2+A_1)^{(1)})$ discussed in the introduction. Consider again the element $R = \pi^2 s_1$, which we recall acts on the parameters according to 
\begin{equation}
R.(a_0, a_1, a_2, c) = (a_2 a_0, q^{-1} a_2 a_1 , q a_2^{-1} , c).
\end{equation}
Again, this gives a linear system of equations, this time of $q$-difference type, for $R^{n}.(a_0,a_1,a_2,c)$, which we may solve explicitly to obtain
\begin{subequations}
\begin{align}
R^n .a_0 &= q^{\frac{1}{4}(3 + (-1)^{n}+2n)} a_1^{-1} a_2^{-\frac{1}{2}(1 + (-1)^{n})} = \begin{cases} a_0 q^{n/2}  & \text{for } n \text{ even}, \\  a_1 q^{1/2+n/2} & \text{for } n \text{ odd}, \end{cases} \\
R^n .a_1 &= q^{\frac{1}{4}(-1 + (-1)^n- 2n)} a_1 a_2^{\frac{1}{2}(1+(-1)^{n})}= \begin{cases} a_1 q^{-n/2}  & \text{for } n \text{ even}, \\  a_1 a_2 q^{-1/2 -n/2} & \text{for } n \text{ odd}, \end{cases} \\
R^n .a_2 &= q^{\frac{1}{2}(1+ (-1)^{n+1})} a_2^{(-1)^n} = \begin{cases} a_2 & \text{for } n \text{ even}, \\  q a_2^{-1}& \text{for } n \text{ odd}, \end{cases}  \quad \quad R^n .c = c.
\end{align}
\end{subequations}
With this, we can write down an explicit form of the discrete system induced by the action of $R$ on the variables by setting $F_n = R^n.f_0$, and therefore obtain the following generic version of \eqref{qP2}:
\begin{equation} \label{genericR}
F_{n+1}F_{n-1}=\frac{q c^2}{ F_{n}} \frac{(a_0^{\frac{1}{2} \left((-1)^n-1\right)} + a_0 F_{n} q^{\frac{n}{2}} q^{\frac{1}{4} \left((-1)^n-1\right)})}{(a_0 q^{\frac{n}{2}} q^{\frac{1}{4} \left((-1)^n-1\right)}+a_0^{\frac{1}{2} \left((-1)^n-1\right)} F_{n} )}.
\end{equation}
Numerical experiments reveal the sequence of iterates of this system to have degree growth identical to that of the projective reduction \eqref{qP2}, which is to be expected given that they correspond to the same Cremona isometry. We now consider an element of $\widetilde{W}( (A_2+A_1)^{(1)})$ which may be regarded as a `twisted translation' similarly to the element $\varphi \in \widetilde{W}(D_4^{(1)})$ associated with the systems \eqref{hoffmans} and \eqref{fullgenericequation}, and construct another integrable full-parameter system from its Cremona action. Keeping the notation of \cite{KNT2011}, we consider the translation $T_4 = \sigma^3 w_0$, whose action on the parameters is given by 
\begin{equation}
T_4.(a_0, a_1, a_2, c) = (a_0, a_1, a_2 , q c). 
\end{equation}
It can be shown that the Dynkin diagram automorphism $\pi = \sigma^2$ preserves the Kac translation vector corresponding to $T_4$, and the element $\tilde{T}_4 = \pi T_4$ is of infinite order, and satisfies $\tilde{T}_4^3 = T_4^3$. Its action on the parameters is given by
\begin{equation}
\tilde{T}_4.(a_0, a_1, a_2, c) = (a_1, a_2, a_0, q c),
\end{equation}
from which we may deduce, by solving a linear system of difference equations as in the previous example, that the result of acting on the parameters $n$ times has the explicit form
\begin{equation}
\tilde{T}_4^n .c = q^n c, \quad \quad \tilde{T}_4^n .(a_0, a_1, a_2) = \begin{cases} (a_0,a_1,a_2) & \text{for } n \equiv 0 \mod 3, \\  (a_1,a_2,a_0) & \text{for } n \equiv 1 \mod 3, \\  (a_2,a_0,a_1) & \text{for } n \equiv 2 \mod 3.
\end{cases}  
\end{equation}
From this, we let $(F_n, G_n, H_n) = \tilde{T}_4^n. (f_0, f_1, f_2)$ and obtain the system of difference equations
\begin{subequations} \label{genericT4}
\begin{align}
F_{n+1} &= a_{(n+1)}a_{(n-1)} H_n \frac{1 + a_{(n)} F_n(1+a_{(n+1)} G_n)}{1 + a_{(n+1)} G_n (1+a_{(n-1)} H_n)}, \\
G_{n+1} &= a_{(n-1)}a_{(n)} F_n \frac{1 + a_{(n+1)} G_n(1+a_{(n-1)} H_n)}{1 + a_{(n-1)} H_n (1+a_{(n)} F_n)},\\
H_{n+1} &= a_{(n)}a_{(n+1)} G_n \frac{1 + a_{(n-1)} H_n(1+a_{(n)} F_n)}{1 + a_{(n)} F_n (1+a_{(n+1)} G_n)},\\
&F_n G_n H_n = q^{2n+1} c^2, \\
\text{where} &\quad \quad a_{(m)} = \begin{cases} a_0 & \text{for } m \equiv 0 \mod 3, \\  a_1 & \text{for } m \equiv 1 \mod 3, \\  a_2 & \text{for } m \equiv 2 \mod 3.
\end{cases}  
\end{align}
\end{subequations}
The method we have outlined for obtaining full-parameter systems from elements of infinite order works for any of the surface types in Sakai's list which admit non-translation elements of infinite order in their symmetry groups. This is due to the fact that the action on the root variables induces a linear system which may always be solved to obtain an explicit form of the result of acting on the parameters $n$ times. The equations \eqref{genericR}, \eqref{genericT4}, and more generally any systen given by the Cremona action on a family of generic $\mathcal{R}$-surfaces of a non-translation element will contain the maximum number of free parameters for their surface type. Moreover, they will be integrable in the sense of vanishing algebraic entropy for the same reason that Sakai's discrete Painlev\'e equations are. We note that there exist other systems, known as strongly asymmetric discrete Painlev\'e equations \cite{RGSTT2014, RGSTT2016}, in which the independent variable enters in a similar way to our generic non-translation equations. These were obtained through deautonomisation of QRT mappings, and it is natural to ask whether they arise as Cremona actions of non-translation elements on a family of generic $\mathcal{R}$-surfaces. In particular, we suspect that the periodic dependence of the coefficients on the independent variable may correspond to the power of the element of the symmetry group that gives a translation.

\section{Concluding remarks}
To summarise, we hope that this paper may be regarded as an effort to better understand the range of integrable systems arising from Sakai's framework, prompted by the geometric treatment of equation \eqref{hoffmans}. Through the geometric approach this system was found to be regularised on a family of $D_4^{(1)}$-surfaces and revealed as a special kind of projective reduction. In the process, a change of variables was obtained relating the system to $\pain{VI}$ which, though not clear by inspection of the equations, was constructed naturally using geometric methods. In particular, the parameter $k$ in system \eqref{hoffmans} was found to correspond to the independent variable in $\pain{VI}$, and an interesting question is whether known results on $\pain{VI}$ could be applied to discrete Amsler surfaces using this relationship. \\
We then demonstrated how generic (full-parameter) discrete systems can be constructed explicitly from the Cremona action on a family of generic $\mathcal{R}$-surfaces using any element of infinite order in the symmetry group. The method relied on the fact that with the root variable parametrisation, the action of the symmetry group is linear in either the additive, multiplicative or elliptic sense, so obtaining an explicit form for the result of acting on the parameters $n$ times is possible through solving a linear difference equation. We then demonstrated that any such system must have vanishing algebraic entropy, for essentially the same reason that the discrete Painlev\'e equations defined by translations do. Though previous studies of projective reductions have demonstrated that the range of integrable systems arising in Sakai's framework is not limited to those defined by translations, to our knowledge this is the first time that systems with the maximal number of parameters for their surface type have been constructed from non-translation symmetries.\\

\noindent \textbf{Acknowledgement:} The author would like to express his sincere thanks to Prof. R. Halburd for fruitful discussions and advice. This research was supported by a University College London Graduate Research Scholarship and Overseas Research Scholarship.

\bibliographystyle{abbrv}
\bibliography{paper1bibliography}

\end{document}